\def\pp{P_{\text{alert}}}
\def\pa{P_{\text{avg}}}
\def\msg{\text{MSG}}
\def\fa{\text{FA}}
\def\md{\text{MD}}
\def \E{\mathbb E}
\def \P{\mathbb P}
\def\ea{\epsilon}
\def\eb{\lambda}
\def\ec{\delta}
\def\ed{\nu}
\def\ee{\beta}
\def\ef{\gamma}
\def\ab{\mathbf{a}}
\def\zb{\mathbf{z}}
\def\Vm{\mathcal{V}}
\def\Sm{\mathcal{S}}
\def\V{\mathsf{Vol}}
\def\n{^{-1}}
\newtheorem{theorem}{Theorem}
\newtheorem{lemma}{Lemma}
\newtheorem{definition}{Definition}
\newtheorem{remark}{Remark}
\title{The AWGN Red Alert Problem}
\author{Bobak Nazer, \IEEEmembership{Member, IEEE}, Yanina Shkel, \IEEEmembership{Graduate Student Member, IEEE}, and Stark C. Draper, \IEEEmembership{Member, IEEE}\thanks{B. Nazer, Y. Shkel, and S. C. Draper were supported by the Air Force Office of Scientific Research under Grant FA9550-09-1-0140 and by the National Science Foundation under grant CAREER 0844539. The material in this paper was presented in part at the 48th Annual Allerton Conference on Communication, Control, and Computing, Monticello, IL, September 2010 and at the 6th Annual Information Theory and Applications Workshop, UCSD, La Jolla, CA, February 2011.}\thanks{B. Nazer was with the Department of Electrical and Computer Engineering, University of Wisconsin, Madison, WI, 53706, USA and is now with the Department of Electrical and Computer Engineering, Boston University, Boston, MA, 02215 (email: \texttt{bobak@bu.edu}). Y. Shkel and S. C. Draper are with the Department of Electrical and Computer Engineering, University of Wisconsin, Madison, WI, 53706, USA; (\texttt{yyshkel@wisc.edu}; \texttt{sdraper@ece.wisc.edu})}}
\date{\today}
\begin{document}
\maketitle

\begin{abstract}
Consider the following unequal error protection scenario. One special message, dubbed the ``red alert'' message, is required to have an extremely small probability of missed detection. The remainder of the messages must keep their average probability of error and probability of false alarm below a certain threshold. The goal then is to design a codebook that maximizes the error exponent of the red alert message while ensuring that the average probability of error and probability of false alarm go to zero as the blocklength goes to infinity. This red alert exponent has previously been characterized for discrete memoryless channels. This paper completely characterizes the optimal red alert exponent for additive white Gaussian noise channels with block power constraints.
\end{abstract}

\section{Introduction}

Communication networks are increasingly being taxed by the enormous demand for instantly available, streaming multimedia. Ideally, we would like to maximize the reliability and data rate of a system while simultaneously minimizing the delay. Yet, in the classical fixed blocklength setting, the reliability function of a code goes to zero as the rate approaches capacity even in the presence of feedback. This seems to imply that, close to capacity, it is impossible to keep delay low and reliability high. However, this lesson is partially an artifact of the block coding framework. The achievable tradeoff changes in a streaming setting where all bits do not need to be decoded by a fixed deadline, but rather, each individual bit must be recovered after a certain delay. In this setting, the reliability function measures how quickly the error probability on each bit estimate decays as a function of the delay. Surprisingly, the achievable error exponent can be quite large at capacity if a noiseless feedback link is available and cleverly exploited \cite{kudryashov79,sahai08}.
 
The distinguishing feature of these streaming architectures with feedback is the use of an ultra-reliable special codeword that is transmitted to notify the decoder when it is about to make an error. While this ``red alert'' codeword requires a significant fraction of the decoding space to attain its very large error exponent, the remaining ``standard'' codewords merely need their error probability to vanish in the blocklength. One question that seems intimately connected to the streaming delay-reliability tradeoff is how large the red alert error exponent can be made for a fixed blocklength codebook of a given rate.  Beyond this streaming motivation, the red alert problem is also connected to certain sensor network scenarios. For example, consider a setting where sensors must send regular updates to the basestation using as little power as possible, i.e., using the standard codewords. If an anomaly is detected, the sensors are permitted to transmit at higher power in order to alert the basestation with high reliability, which corresponds to our red alert problem.

Prior work has characterized the red alert exponent for discrete memoryless channels (DMCs) \cite{bnz09,sd08,ngzc11}. In this paper, we determine the red alert exponent for point-to-point additive white Gaussian noise (AWGN) channels that operate under block power constraints on both the regular and red alert messages.  We derive matching upper and lower bounds on the red alert exponent with a focus on the resulting high-dimensional geometry of the decoding regions. Our code construction can be viewed as a generalization of that used in the discrete case. 

\subsection{Related Work}

Previous studies on protecting a special message over a DMC have relied on some variant of the following code construction. First, designate the special codeword to be the repetition of a particular input symbol. Then, generate a fixed composition codebook at the desired rate. This composition is chosen to place the ``standard'' codewords as far as possible from the special codeword (as measured by the Kullback-Leibler (KL) divergence between induced output distributions) while still allocating each codeword a decoding region large enough to ensure a vanishing probability of error. By construction, the rest of the space is given to the special codeword. Early work by Kudryashov used this strategy to achieve very high error exponents in the bit error setting under an expected delay constraint \cite{kudryashov79}. 

In \cite{bnz09}, Borade, Nakibo\u{g}lu, and Zheng study ``bit''-wise and ``message''-wise unequal error protection (UEP) problems and error exponents.  The red alert problem is a message-wise UEP problem in which one message is special and the remaining messages are standard.  While \cite{bnz09} focuses on general DMCs near capacity, Lemma 1 of that paper develops a general sharp bound on the red alert exponent for DMCs at any rate below capacity (both with and without feedback). Specializing to the exponent achieved at capacity, let $\mathcal{X}$ denote the input alphabet, $\{p_{Y|X}(\cdot|x)\}_{x \in \mathcal{X}}$ the channel transition matrix, and $p_Y^{\ast}(\cdot)$ the capacity-achieving output distribution of the DMC.  Then, the optimal red alert exponent at capacity is
\begin{align}
E_{\rm ALERT} (C) = \max_{x \in \mathcal{X}} D( p_Y^{\ast}(\cdot) \| p_{Y|X}(\cdot|x))
\end{align}
where $D(\cdot\|\cdot)$ is the KL divergence.  We also mention recent work by Nakibo\u{g}lu \textit{et al.} \cite{ngzc11,ngzc10} that considers the generalization where a strictly positive error exponent is required of the standard messages.  

For the binary symmetric channel (BSC), the optimal red alert exponent has a simple and illustrative form. This exponent can be inferred from the general expression in \cite[Lemma 1]{bnz09} or via a direct proof due to Sahai and Draper \cite{sd08} (which appeared concurrently with the conference version \cite{bnz08} of \cite{bnz09}). Let $p$ denote the crossover probability of the BSC and $q$ the probability that a symbol in the codebook is a one.  Then, the optimal red alert exponent as a function of rate $R < C$ for the BSC is
\begin{align}
E_{\rm ALERT} (R) = \max_{h_B(q \ast p) - h_B(p) \leq R} D(q \ast p \| p)
\end{align}
where $h_B(p) = -p \log p - (1-p) \log (1-p)$, $q \ast p =  p (1-q) + q (1-p)$, and $D(\tilde{p} \|p) = \tilde{p} \log \left( \frac{\tilde{p}}{p} \right) + (1-\tilde{p}) \log \left( \frac{1-\tilde{p}}{1-p} \right)$. 
 
Csisz\'{a}r studied a related problem where multiple special messages require higher reliability in \cite{csiszar80}. Upper bounds for multiple special messages with different priority levels were also developed in \cite{bnz09}. In \cite{bs09}, Borade and Sanghavi examined the red alert problem from a coding theoretic perspective. As shown by Wang \cite{wangmsthesis}, similar issues arise in certain sparse communication problems where the receiver must determine whether a codeword was sent or the transmitter was silent. 

The fundamental mechanism through which high red alert exponents are achieved is a binary hypothesis test. By designing the induced distributions at the output of the channel to be far apart as measured by KL divergence, we can distinguish whether the red alert or some standard codeword was sent. The test threshold is biased to minimize the probability of missed detection and is analyzed via an application of Stein's Lemma. This sort of biased hypothesis test occurs in numerous other communication settings with feedback, such as \cite{horstein63,forney68,burnashev76} and, as mentioned earlier, these codes are also used as a component in streaming data systems (see, for instance, \cite{kudryashov79,ds06,sahai08,sd10}). There is also a rich literature on the interplay between hypothesis testing and information theory, which we cannot do justice to here (see, for instance, \cite{blahut74,ac86,hk89}). 

\section{Problem Statement}

First, we mention some of our notational choices. We will use boldface lowercase letters to denote column vectors, $\mathbf{0}$ to denote the all zeros vector, and $\mathbf{1}$ to denote the all ones vector. Throughout the paper, the $\log$ function is taken to be the natural logarithm and rate is measured in nats instead of bits. We use $\| \mathbf{x} \|$ to denote the Euclidean norm of the vector $\mathbf{x}$.

\begin{definition}[Messages] The transmitter has a \textit{message} $w \in \{0, 1, 2,\ldots, M\}$ that it wants to convey to the receiver. One of the messages, $w=0$, is a red alert message that will be afforded extra error protection. We assume the red alert message is chosen with some probability greater than $0$ and the remaining messages are chosen with equal probability.
\end{definition}

\begin{definition}[Encoder] The \textit{encoder} $\mathcal{E}$ maps the message $w$ into a length-$n$ real-valued codeword $\mathbf{x}$ for transmission over the channel, $\mathcal{E}: \{0,1,2,\ldots, M\} \rightarrow \mathbb{R}^n$. Let $\mathbf{x}(w)$ denote the codeword used for message $w$ and let $\mathcal{C}$ denote the entire codebook, $\mathcal{C} = \{\mathbf{x} (0), \mathbf{x}(1), \ldots, \mathbf{x}(M)\}$. The codebook must satisfy both an average block power constraint across codewords,
\begin{align}
\frac{1}{M} \sum_{w = 1}^M \| \mathbf{x}(w) \|^2 &\leq n\pa  \ . \label{e:avgpower}
\end{align} In addition, the red alert codeword must satisfy a less stringent power constraint,
\begin{align}
\| \mathbf{x}(0) \| ^2 &\leq n\pp \ , \label{e:alertpower}
\end{align} for some $\pp \geq \pa$. The \textit{rate} of the codebook is 
\begin{align}
R = \frac{1}{n} \log M 
\end{align} nats per channel use.
\end{definition}
\begin{remark}
Note that our codebook average power constraint (\ref{e:avgpower}) is less strict that the usual block power constraint $\| \mathbf{x}(w) \|^2 \leq n\pa$. Our achievable scheme can be easily modified to meet this constraint using expurgation. Furthermore, our red alert power constraint (\ref{e:alertpower}) is less strict than a peak power constraint $| x_i(0) |^2 \leq \pp$ ~$\forall i$, where $x_i(0)$ denotes the $i$th symbol of the red alert codeword. Our scheme sets the red alert codeword to be $\mathbf{x}(0) = -\sqrt{\pp} \mathbf{1}$, which naturally satisfies a peak power constraint. Therefore, our main results hold under an average power constraint and peak power constraint as well.  
\end{remark}

\begin{remark}We omit the red alert codeword from the average block power constraint for the sake of simplicity. Another possibility would be to consider only an average block power constraint over both the standard and red alert codeword. This would lead to two different tensions between maximizing the red alert exponent and maximizing the rate. The first would be the allocation of the decoding regions and the second would be the allocation of power based on the probability of a red alert message. By using two separate power constraints, we can state our results in a simpler form that does not depend on the red alert probability.   \end{remark} 

\begin{definition}[Channel] The \textit{channel} outputs the transmitted vector, corrupted by independent and identically distributed (i.i.d.) Gaussian noise:
\begin{align}
\mathbf{y} = \mathbf{x} + \mathbf{z}
\end{align} where $\mathbf{z} \sim \mathcal{N}(\mathbf{0}, N \mathbf{I}^{n \times n})$ for some noise variance $N > 0$.
\end{definition}

\begin{definition}[Decoder] The signal observed by the receiver is sent into a \textit{decoder} which produces an estimate $\hat{w}$ of the transmitted message $w$, $\mathcal{D}: \mathbb{R}^n \rightarrow \{0,1,2,\ldots,M\}$.
\end{definition}

\begin{definition}[Error Probability] We are concerned with three quantities, the \textit{probability of missed detection} of the red alert message $p_{\md}$, the \textit{probability of false alarm} $p_{\fa}$,  and the \textit{average probability of error} of all other messages $p_{\msg}$:
\begin{align}
p_{\md} &= \P( \hat{w} \neq w | w = 0 )\\
p_{\fa} &= \P( \hat{w} = 0 | w \neq 0 )\\
p_{\msg} &= \frac{1}{M}\sum_{w=1}^{M} \P( \hat{w} \neq w | w \neq 0,\hat{w} \neq 0) \ . 
\end{align} 
\end{definition}

\begin{definition}[Error Exponent]
We say that a \textit{red alert exponent} of $E_{\text{ALERT}}(R)$ is achievable if for every $\epsilon > 0$ and $n$ large enough, there exists a rate $R$ encoder and a decoder such that
\begin{align}
-\frac{1}{n} \log\left(p_{\text{MD}}\right) &> E_{\text{ALERT}}(R) - \epsilon \\
p_{\text{FA}} &< \epsilon \\ 
p_{\text{MSG}} &< \epsilon \ . 
\end{align} 
\end{definition} In other words, we would like the red alert codeword to have as large an error exponent as possible while keeping the other error probabilities small. The standard codewords do not need to have a positive error exponent. Of course, the rate must be lower than the AWGN capacity, $R \leq C$, where 
\begin{align}
C \triangleq \frac{1}{2} \log{\left(1 + \frac{\pa}{N}\right)} \ .
\end{align} 

\subsection{High-Dimensional Geometry} 

We now review some basic facts of high-dimensional geometry that will be useful in our analysis.

Let $\mathcal{B}_n(\mathbf{a},r)$ denote the $n$-dimensional ball centered at $\mathbf{a} \in \mathbb{R}^n$ with radius $r > 0$. Recall that the volume of $\mathcal{B}_n(\mathbf{a},r)$ is
\begin{align}
\mathsf{Vol}\left(\mathcal{B}_n(\mathbf{a},r)\right) = \frac{r^n\pi^{n/2}}{\Gamma\left(\frac{n}{2} + 1\right)} \label{e:spherevol}
\end{align} where $\Gamma(\cdot)$ is the gamma function \cite[Ch. 1, Eq. 16]{conwaysloane}. We define $\mathcal{S}_n(\mathbf{a},r)$ to be the surface of $\mathcal{B}_n(\mathbf{a},r)$. Its surface area (or, more precisely, the $(n-1)$-dimensional volume of its surface) is 
\begin{align}
\mathsf{Vol}\left(\mathcal{S}_n(\mathbf{a},r)\right) = \frac{nr^{n-1}\pi^{n/2}}{\Gamma\left(\frac{n}{2} + 1\right)} 
\end{align} \cite[Ch. 1, Eq. 19]{conwaysloane}. The dimension of the $\V(\cdot)$ function will always be clear from the context. We also define 
\begin{align}
\mathcal{T}_n(\mathbf{a},r_1,r_2) \triangleq \big\{\mathbf{x}: r_1 \leq \| \mathbf{x} - \mathbf{a} \| \leq r_2 \big\} \end{align} to be the spherical shell centered at $\mathbf{a}$ from radius $r_1$ to $r_2$. 

The angle between two $n$-dimensional vectors $\mathbf{a}$ and $\mathbf{b}$ is 
\begin{align}
\measuredangle(\mathbf{a},\mathbf{b}) = \cos^{-1} \left( \frac{\mathbf{a}^T \mathbf{b}}{\| \mathbf{a}\| \|\mathbf{b}\|} \right)
\end{align} where $\cos^{-1}(\cdot)$ takes values between $0$ and $\pi$. Let $\mathcal{V}_n(\mathbf{a},\mathbf{b},\theta)$ denote the $n$-dimensional cone with its origin at $\mathbf{a}$, its center axis running from $\mathbf{a}$ to $\mathbf{b}$, and of half-angle $\theta$ which takes values from $0$ to $\pi/2$. The solid angle $\Omega(\theta)$ of an $n$-dimensional cone of half-angle $\theta$ is the fraction of surface area that it carves out of an $n$-dimensional sphere, 
\begin{align}
\Omega(\theta) = \frac{\V\Big( \Vm_n(\mathbf{0}, \mathbf{1},\theta) \cap \Sm_n(\mathbf{0},r) \Big)}{\V\big( \Sm_n(\mathbf{0},r)\big)} \ . 
\end{align} Note that the solid angle is the same for any sphere radius $r > 0$. 

\begin{lemma}[Shannon] \label{l:surfaceratio}The solid angle of a cone with half-angle $\theta$ satisfies
\begin{align}
\Omega(\theta) = \frac{\sin^n\theta}{\sqrt{2\pi n} \sin \theta \cos \theta } \bigg(1 + O\left(\frac{1}{n}\right)\bigg) \ . \nonumber
\end{align}
\end{lemma}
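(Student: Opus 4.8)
The plan is to write $\Omega(\theta)$ in closed form as the ratio of a spherical cap area to the whole sphere area, and then to estimate the resulting integral by Laplace's method. First I would slice the unit sphere $\Sm_n(\mathbf 0,1)$ by the polar angle $\phi\in[0,\pi]$ measured from the cone axis: the set of points at polar angle $\phi$ is a copy of $\Sm_{n-1}(\mathbf 0,1)$ scaled by $\sin\phi$, so in spherical coordinates the surface measure of $\Sm_n(\mathbf 0,1)$ is $\sin^{n-2}\phi\,d\phi$ times the surface measure of $\Sm_{n-1}(\mathbf 0,1)$, and the cone $\Vm_n(\mathbf 0,\mathbf 1,\theta)$ is exactly the region $\{\phi\le\theta\}$. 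Dividing the cap area by the whole-sphere area and substituting the surface-area formula quoted above yields
\begin{align}
\Omega(\theta)=K_n\int_0^\theta\sin^{n-2}\phi\,d\phi,\qquad K_n:=\frac{(n-1)\,\Gamma\!\big(\tfrac n2+1\big)}{n\sqrt\pi\,\Gamma\!\big(\tfrac{n+1}2\big)}. \nonumber
\end{align}
(Letting $\theta\to\pi$ recovers the Wallis identity $\V(\Sm_n(\mathbf 0,1))=\V(\Sm_{n-1}(\mathbf 0,1))\int_0^\pi\sin^{n-2}\phi\,d\phi$, a useful sanity check.)

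The prefactor $K_n$ is routine: by Stirling, or the standard ratio asymptotic $\Gamma(z+a)/\Gamma(z+b)=z^{a-b}(1+O(1/z))$ with $z=n/2$, one gets $\Gamma(n/2+1)/\Gamma((n+1)/2)=\sqrt{n/2}\,(1+O(1/n))$ and hence $K_n=\sqrt{n/(2\pi)}\,(1+O(1/n))$. The real content is the incomplete Wallis integral $I_n:=\int_0^\theta\sin^{n-2}\phi\,d\phi$, whose mass concentrates near the upper limit $\phi=\theta$ where $\sin^{n-2}\phi$ is largest. Writing $a:=n-2$ and substituting $\phi=\theta-t$, the upper bound is immediate: since $\cot$ is decreasing, $\log\sin\theta-\log\sin(\theta-t)=\int_{\theta-t}^{\theta}\cot\psi\,d\psi\ge t\cot\theta$, so $\sin(\theta-t)\le\sin\theta\,e^{-t\cot\theta}$ on $[0,\theta]$, and integrating gives $I_n\le\sin^{a+1}\theta/(a\cos\theta)$.

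For the matching lower bound I would restrict the integral to a shrinking window $[0,\delta_n]$ with, say, $\delta_n=n^{-1/2}$, Taylor-expand $\log\sin(\theta-t)=\log\sin\theta-t\cot\theta-\tfrac{t^2}{2}\csc^2(\theta-\eta_t)$ with $\eta_t\in(0,t)$, bound $\csc^2(\theta-\eta_t)\le\csc^2(\theta/2)=:c_\theta$ on the window (valid once $\delta_n<\theta/2$), and apply $e^{-x}\ge 1-x$ to get $\sin^a(\theta-t)\ge\sin^a\theta\,e^{-at\cot\theta}(1-\tfrac{c_\theta}{2}at^2)$. Integrating over $[0,\delta_n]$, the truncated exponential integral gives the main term $\tfrac{1}{a\cot\theta}(1-e^{-a\delta_n\cot\theta})$ with $e^{-a\delta_n\cot\theta}=o(1/n)$, while the quadratic piece $\tfrac{c_\theta}{2}\int_0^\infty at^2e^{-at\cot\theta}\,dt$ is a relative correction of order $c_\theta/(a\cot^2\theta)=O(1/n)$; hence $I_n\ge\sin^{a+1}\theta/(a\cos\theta)\cdot(1-O(1/n))$. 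Combining the two bounds, $I_n=\sin^{n-1}\theta/((n-2)\cos\theta)\cdot(1+O(1/n))=\sin^{n-1}\theta/(n\cos\theta)\cdot(1+O(1/n))$, and multiplying by $K_n$ produces $\Omega(\theta)=\sin^n\theta/(\sqrt{2\pi n}\,\sin\theta\cos\theta)\cdot(1+O(1/n))$.

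The main obstacle is pinning the lower bound on $I_n$ to \emph{relative} accuracy $1+O(1/n)$: one must take the truncation window wide enough that the discarded tail $e^{-a\delta_n\cot\theta}$ is negligible against $1/n$, yet the quadratic Laplace correction is genuinely of size $\Theta(1/n)$ and must be controlled (here it suffices to bound it, since it has a definite sign, but it cannot simply be discarded); the slicing identity and the $\Gamma$-ratio asymptotics are routine by comparison. I would also flag that the estimate is not uniform as $\theta\uparrow\pi/2$ — the implied constant depends on $\theta$ through $\cot\theta$ and diverges at $\pi/2$ — consistent with the exact value $\Omega(\pi/2)=\tfrac12$ and with the right-hand side of the claimed formula blowing up there.
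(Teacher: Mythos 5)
The paper defers this lemma entirely to Shannon's 1959 argument (the text preceding Eq.\ 28 there), which proceeds exactly as you do: the slicing identity $\Omega(\theta)=K_n\int_0^\theta\sin^{n-2}\phi\,d\phi$ with $K_n=\frac{(n-1)\Gamma(\frac n2+1)}{n\sqrt\pi\,\Gamma(\frac{n+1}2)}$, the $\Gamma$-ratio asymptotics for $K_n$, and a Laplace-type estimate of the incomplete Wallis integral concentrating mass near $\phi=\theta$. Your derivation is correct and is essentially Shannon's route --- the $\cot$-monotonicity upper bound, the truncated-window lower bound with the quadratic Taylor correction of relative size $O(1/n)$, and the explicit observation that the implied constant degrades as $\theta\uparrow\pi/2$ are all sound.
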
 See the math leading up to Equation 28 in \cite{shannon59} for a proof.
\section{Main Result}

In the binary case, the simplest characterization of the optimal codebook is a statistical one: the red alert codeword is the zero vector and the remaining codewords are of a constant composition. From one perspective, this can be visualized as placing the red alert codeword in the ``center'' of the space with the other codewords encircling it (see Figure \ref{f:topdown}). This corresponds to choosing the red alert codeword to be the all zeros (or all ones) vector. The standard codewords are generated using the distribution that maximizes the KL divergence between output distributions while still supporting a rate $R$. While this two-dimensional illustration is quite useful for understanding the binary case, it can be misleading in the Gaussian case. Specifically, it suggests that we should place the red alert codeword at the origin which turns out to be suboptimal. 

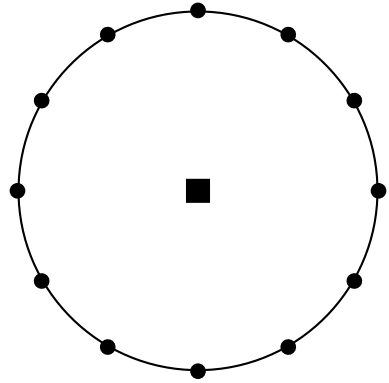
\begin{figure}[h]
\begin{center}
\psset{unit=0.8mm}
\begin{pspicture}(-30,-30)(30,30)

\multido{\n=150+30}{12}{
\rput{\n}(0,0){
\pscircle[fillstyle=solid,fillcolor=black](30,0){1.3}
}}
\pscircle[linestyle=solid](0,0){30}

\rput(0,0){\psframe[fillstyle=solid,linecolor=black,fillcolor=black](-2,-2)(2,2)}

\end{pspicture}
\end{center}
\caption{For a BSC, we can visualize the red alert codeword (solid square) sitting in the ``center'' of the codebook and the standard codewords (solid circles) occupying a thin shell around it. While this illustration is generally sufficient for developing the intuition behind the discrete case, it does not capture the full story in the Gaussian case. }\label{f:topdown}
\end{figure}

Another way of looking at the binary construction is to visualize each fixed composition as a parallel (or circle of constant latitude) on a sphere (see Figure \ref{f:sideview}). That is, the code lives on the Hamming cube in $n$ dimensions, which can be imagined as a sphere by taking the all zeros and all ones vectors as the two poles and specifying the parallels by their Hamming weight. From this viewpoint, the binary construction sets the red alert codeword to be one of the poles and chooses the remaining codewords on the furthest parallel that can support a codebook of rate $R$. This perspective leads naturally to the right construction for the Gaussian case. Essentially, the standard codewords are placed uniformly along a constant parallel. This can be achieved by generating the standard codewords using a capacity-achieving code with a fraction $\alpha$ of the total power. The red alert codeword is placed at the furthest limit of the red alert power constraint (e.g., at $-\sqrt{\pp}\mathbf{1}$) and the standard codewords are offset in the opposite direction (e.g., by $\sqrt{(1-\alpha)\pa} \mathbf{1}$). See Figure \ref{f:offsetcodebook} for an illustration. In the high-dimensional limit, most of the codewords will live on a parallel, thus mimicking the binary construction. This scheme leads us to the optimal red alert exponent.

\begin{theorem} \label{t:main} For an AWGN channel with red alert power constraint $\pp$, average power constraint $\pa$, and rate $R$, the optimal red alert exponent is
\begin{align}
E(R)= \frac{\pp + \pa + 2\sqrt{\pp(\pa + N(1 - e^{2R}))}}{2N} - R \nonumber \ . 
\end{align}
\end{theorem}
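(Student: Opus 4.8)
We would prove the theorem by establishing matching achievability and converse bounds, the common quantity being the scalar Gaussian divergence $D\big(\mathcal N(\mu,\sigma^{2})\,\big\|\,\mathcal N(-\sqrt{\pp},N)\big)$, which here plays the role that $D\big(p_{Y}^{\ast}\,\|\,p_{Y|X}(\cdot|x)\big)$ plays for DMCs. For achievability, set $\mathbf x(0)=-\sqrt{\pp}\,\mathbf 1$, fix a power split $\alpha$ with $\alpha\pa\ge N(e^{2R}-1)$, take a capacity-achieving (hence output-resolvable) AWGN code of rate $R$ and power $\alpha\pa$, and offset every standard codeword by $\sqrt{(1-\alpha)\pa}\,\mathbf 1$; the average power is then $\pa$ up to expurgation, and $\|\mathbf x(0)\|^{2}=n\pp$. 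The decoder first performs a biased binary (Neyman--Pearson) test of ``red alert'' versus ``standard'' and, in the latter case, does ordinary message decoding inside the standard region. The key feature is that this test is between $P_{0}=\mathcal N(\mathbf x(0),N\mathbf I)$ and a standard-output law which, for a good code, is indistinguishable from the product law $\mathcal N(\mu\mathbf 1,\sigma^{2}\mathbf I)$ with $\mu=\sqrt{(1-\alpha)\pa}$ and $\sigma^{2}=\alpha\pa+N$. Choosing the threshold so that $p_{\fa}\to0$ and $p_{\md}$ is as small as possible, Stein's lemma gives a missed-detection exponent equal to the per-letter divergence $D\big(\mathcal N(\mu,\sigma^{2})\,\|\,\mathcal N(-\sqrt{\pp},N)\big)$, while $p_{\msg}\to0$ because $R$ does not exceed the capacity $\tfrac12\log(1+\alpha\pa/N)$ of the ``standard subchannel'' (use a vanishing amount of slack). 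A one-line derivative computation shows this divergence is strictly decreasing in $\alpha$, so the smallest feasible value $\alpha\pa=N(e^{2R}-1)$ is optimal; it forces $\sigma^{2}=Ne^{2R}$ and $\mu^{2}=\pa+N(1-e^{2R})$, and substituting these into the divergence and simplifying leaves exactly $E(R)$.

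For the converse, fix any rate-$R$ codebook and decoder with $p_{\fa},p_{\msg}\to0$; after routine preliminaries (expurgating to a subcode of rate $R-o(1)$ on which all standard codewords obey the constraints and are reliable, and rotating so $\mathbf x(0)$ lies along $\mathbf 1$) we proceed in four steps. (i) Let $\bar{\mathbf x}=\tfrac1M\sum_{w\ge1}\mathbf x(w)$; the translated code $\{\mathbf x(w)-\bar{\mathbf x}\}$ has identical error behaviour, so the weak AWGN channel-coding converse applied to it gives $\tfrac1M\sum_{w}\|\mathbf x(w)-\bar{\mathbf x}\|^{2}\ge nN(e^{2R}-1)-o(n)$, and combining with $\tfrac1M\sum_{w}\|\mathbf x(w)\|^{2}\le n\pa$ and the Pythagorean identity $\tfrac1M\sum_{w}\|\mathbf x(w)\|^{2}=\|\bar{\mathbf x}\|^{2}+\tfrac1M\sum_{w}\|\mathbf x(w)-\bar{\mathbf x}\|^{2}$ yields $\|\bar{\mathbf x}\|^{2}\le n\big(\pa+N(1-e^{2R})\big)+o(n)$. (ii) By Cauchy--Schwarz, $\tfrac1M\sum_{w}\|\mathbf x(w)-\mathbf x(0)\|^{2}\le n\pa+n\pp+2\|\bar{\mathbf x}\|\,\|\mathbf x(0)\|\le n\big(\pa+\pp+2\sqrt{\pp(\pa+N(1-e^{2R}))}\big)+o(n)=2nN\big(E(R)+R\big)+o(n)$, equivalently $\tfrac1M\sum_{w}D(P_{w}\|P_{0})\le n\big(E(R)+R\big)+o(n)$ with $P_{w}=\mathcal N(\mathbf x(w),N\mathbf I)$. (iii) With $P_{\mathrm{std}}=\tfrac1M\sum_{w}P_{w}$, the compensation identity $\tfrac1M\sum_{w}D(P_{w}\|P_{0})=D(P_{\mathrm{std}}\|P_{0})+I(W;\mathbf Y)$ together with Fano's inequality (vanishing message error gives $I(W;\mathbf Y)\ge nR-o(n)$) yields $D(P_{\mathrm{std}}\|P_{0})\le nE(R)+o(n)$. (iv) A change-of-measure/Stein converse treats the red-alert region $\mathcal D_{0}$ as a test between $P_{\mathrm{std}}$ (for which $P_{\mathrm{std}}(\mathcal D_{0})=p_{\fa}\to0$) and $P_{0}$: for any $\epsilon>0$, $p_{\md}=P_{0}(\mathcal D_{0}^{c})\ge e^{-n(E(R)+\epsilon)}\big(1-p_{\fa}-P_{\mathrm{std}}\big(\tfrac1n\log\tfrac{dP_{\mathrm{std}}}{dP_{0}}>E(R)+\epsilon\big)\big)$, so once the final probability tends to $0$ we conclude $-\tfrac1n\log p_{\md}\le E(R)+o(1)$.

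The step I expect to be the main obstacle is the last one in the converse: showing that $\tfrac1n\log\tfrac{dP_{\mathrm{std}}}{dP_{0}}(\mathbf Y)$ does not exceed $E(R)$ with non-vanishing probability under $\mathbf Y\sim P_{\mathrm{std}}$. This is a resolvability-type concentration statement for the output of an \emph{arbitrary} good rate-$R$ code: crude bounds (for instance, replacing the mixture $\tfrac1M\sum_{w}e^{(\cdot)}$ by its largest term) only recover the weaker exponent $E(R)+R$, and the missing $R$ is precisely the mutual-information term $I(W;\mathbf Y)$, so it must be retrieved by a genuine high-dimensional argument controlling how the codebook's output fills the typical set of its single-letter law. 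The mirror-image difficulty on the achievability side is justifying the replacement of the true (mixture) standard-output law by $\mathcal N(\mu\mathbf 1,\sigma^{2}\mathbf I)$ in the Neyman--Pearson analysis. The remaining ingredients --- the scalar optimization over $\alpha$, the channel-coding and Fano converses, and the Cauchy--Schwarz bookkeeping --- are routine once these concentration facts are available.
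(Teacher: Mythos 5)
Your proposal is correct in its broad outlines but differs substantially in method from the paper, especially in the converse; and the two ``obstacles'' you flag can actually be circumvented with more elementary tools than you anticipate, so there is no genuine gap in your plan, only in the specific way you propose to close it. The paper's proof is almost entirely geometric: the achievability (Lemmas~\ref{l:distance}--\ref{l:achievable}) confines a standard codeword plus noise to a conical shell around $\mathbf{x}(0)$ and then bounds $p_{\md}$ by Shannon's solid-angle estimate (Lemma~\ref{l:surfaceratio}) combined with Cram\'er's theorem for the radial component; the converse (Lemmas~\ref{l:shell}--\ref{l:converse}) extracts a shell of good codewords, shows their decoding region must occupy volume at least $V_{\text{MIN}}$ inside the noise-inflated shell, packs that volume optimally far from $\mathbf{x}(0)$, and lower bounds $p_{\md}$ by the probability the noise lands in the resulting annular cap. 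Your converse, by contrast, is purely information-theoretic: the weak channel-coding converse on the recentered code to bound $\|\bar{\mathbf{x}}\|^{2}$, Cauchy--Schwarz to bound $\tfrac1M\sum_w D(P_w\|P_0)$, the compensation identity $\tfrac1M\sum_w D(P_w\|P_0)=D(P_{\rm std}\|P_0)+I(W;\mathbf Y)$ together with Fano to deduce $D(P_{\rm std}\|P_0)\le nE(R)+o(n)$, and a hypothesis-testing step to pass from the divergence to $p_{\md}$. Steps (i)--(iii) are clean, correct, and arguably more transparent (and more portable to other channels) than the paper's volume computation.

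The concentration obstacle you worry about in step~(iv), however, is not needed. Replace the change-of-measure bound by the weak (data-processing) converse: applying the data-processing inequality for relative entropy to the binary map $\mathbf{y}\mapsto \mathbb{1}[\mathbf{y}\in\mathcal D_0]$ gives $D(P_{\rm std}\|P_0)\ge d\big(p_{\fa}\,\|\,1-p_{\md}\big)\ge(1-p_{\fa})\log\tfrac1{p_{\md}}-\log 2$, which together with step~(iii) and $p_{\fa}\to0$ immediately yields $-\tfrac1n\log p_{\md}\le E(R)+o(1)$. No statement about $\tfrac1n\log\tfrac{dP_{\rm std}}{dP_0}$ concentrating under $P_{\rm std}$ is required, because the weak converse only uses the divergence in expectation. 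Likewise, the resolvability concern on the achievability side disappears if (as the paper does) you draw the $\mathbf v(w)$ i.i.d.\ Gaussian and average over the random codebook: under a standard message the induced law of $\mathbf Y$, after averaging over the codebook draw, is \emph{exactly} $\mathcal N(\mu\mathbf 1,\sigma^2\mathbf I)$, so a single-letter log-likelihood threshold test between that product law and $\mathcal N(-\sqrt{\pp}\mathbf 1,N\mathbf I)$ makes the average $p_{\fa}$ vanish by the law of large numbers and gives $p_{\md}$ the Stein/Cram\'er exponent $D(\mathcal N(\mu,\sigma^2)\|\mathcal N(-\sqrt{\pp},N))$, after which the usual Markov expurgation yields a fixed codebook meeting all three requirements simultaneously. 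With those two substitutions, your route is a valid alternative to the paper's geometric argument; the paper's approach avoids divergences and testing altogether in favor of explicit bounds on cone angles and shell volumes.
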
 We prove achievability in Lemma \ref{l:achievable} and provide a matching upper bound in Lemma \ref{l:converse}. 

In the conference version of this paper \cite{nd10allerton}, we used a different code construction that lead to a smaller achievable red alert exponent. The codewords were generated uniformly on the sphere of radius $\sqrt{n\pa}$ and we only kept those that fell within a cone of appropriate half-angle. This type of construction turns out not to achieve as dense a packing as the construction used in this paper. In Appendix \ref{a:cone}, we explore the reasons why this occurs in the binary case. In Appendix \ref{a:coneexponent}, we state the achievable red alert exponent for the conical construction.

\begin{figure}
\begin{center}
\includegraphics{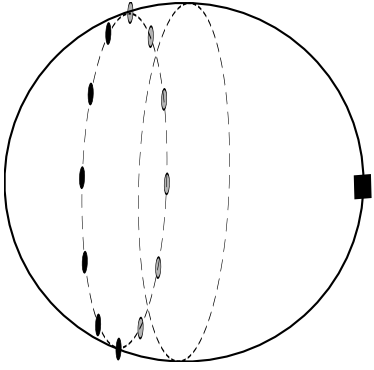}
\end{center}\caption{From an alternate viewpoint, we can visualize the red alert codeword (solid square) sitting on the pole of a sphere and the standard codewords (solid circles) on a parallel on the opposite hemisphere. Our code construction for the Gaussian case is inspired by this picture. If the red alert power constraint is larger than the average power constraint, the red alert codeword should be placed on the sphere's axis but off its the surface, directly above the pole. } \label{f:sideview}
\end{figure}

\begin{figure}[h]
\begin{center}
\psset{unit=0.85mm}
\begin{pspicture}(-50,-15)(55,15)
\psframe[fillcolor=black,fillstyle=solid](-50,-2)(-46,2)
\pscircle[fillstyle=solid,fillcolor=gray!20!white](40,0){15}
\psline[linewidth=1.5pt](-48,0)(0,0)
\rput(-24,-4){\small{$\sqrt{n\pp}$}}
\psline[linewidth=1.5pt,linestyle=dashed]{->}(0,0)(40,0)
\rput(12,-4){\small{$\sqrt{n(1-\alpha)\pa}$}}
\psline[linewidth=1.5pt]{->}(40,0)(40,15)
\rput{270}(44,3){\footnotesize{$\sqrt{n(\alpha\pa-\lambda)}$}}
\pscircle[linewidth=1.5pt,fillstyle=solid,fillcolor=white](0,0){1.3}

\end{pspicture}
\end{center}
\caption{Red alert codebook construction. The red alert codeword (solid square) is placed at $-\sqrt{\pp} \mathbf{1}$ which takes it a distance $\sqrt{n\pp}$ from the origin (circle). The standard codewords (shaded region) are drawn i.i.d. according to a Gaussian distribution with variance $\alpha \pa - \lambda$. These codewords are pushed away from the origin by an offset $\sqrt{(1-\alpha)\pa} \mathbf{1}$ (dashed line). }\label{f:offsetcodebook}
\end{figure}
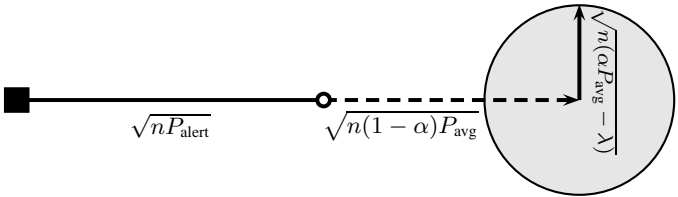

\section{Codebook Construction} \label{s:codebook}

Our codebook construction for $\mathcal{C}$ consists of the following steps:

\begin{enumerate}
\item Choose $\ea > 0$ so that $R < C - \ea$.
\item The red alert codeword is placed at the boundary of the red alert power constraint, $\mathbf{x}(0) = -\sqrt{\pp} \ \mathbf{1}$. 
\item Choose $0<\alpha \leq 1$ so that 
\begin{align}
R + \ea = \frac{1}{2}\log\left(1 + \frac{\alpha\pa }{N}\right)
\end{align} and choose $\eb > 0$ so that 
\begin{align}
R + \frac{\ea}{2} = \frac{1}{2} \log\left(1 + \frac{\alpha \pa - \eb}{N} \right) \ . 
\end{align}
\item Draw $e^{nR}$ codewords $\mathbf{v}(1),\ldots,\mathbf{v}(e^{nR})$ i.i.d. according to a Gaussian distribution with mean zero and variance $\alpha \pa - \eb$.
\item To each of these codewords, add an offset $\sqrt{(1-\alpha)\pa} \ \mathbf{1}$ so that the transmitted codeword for each message (other than $w = 0$) is $\mathbf{x}(w) = \sqrt{(1-\alpha)\pa}  \ \mathbf{1} + \mathbf{v}(w)$.
\end{enumerate}
We will show that this procedure yields a random codebook $\mathcal{C}$ whose false alarm probability and average probability of error are both less than $\epsilon$. Afterwards, we will characterize the probability of missed detection for the red alert codeword. This will in turn imply the existence of a good fixed codebook.

\section{Achievability}

In this section, we will show that the red alert error exponent stated in Theorem \ref{t:main} is achievable. We begin by stating useful large deviations bounds that will play a role in both the proof of the achievability and of the converse. Next, we show that any standard codeword plus noise lies at a certain distance from the red alert codeword with high probability. Afterwards, we argue that, with high probability, any standard codeword plus noise is contained in a cone of a certain half-angle that is centered on the red alert codeword. By combining the distance and angle bounds, we can constrain the decoding region for the standard codewords to the intersection of a cone with a shell. The remainder of $\mathbb{R}^n$ can thus be allocated to the decoding region for the red alert codeword, for which we will bound the resulting probability of a missed detection. 

\subsection{Large Deviations Bounds}

Our upper and lower bounds on the probability of error are proven by deriving bounds on the size and shape of the decoding regions and then applying Cram\'er's Theorem to get large deviations bounds. Define $g_X(a)$ to be the moment generating function of a random variable $X$,
\begin{align}
g_X(a) \triangleq \E[e^{aX}] \ , 
\end{align} and $I_X(b)$ to be the Fenchel-Legendre transform \cite[Definition 2.2.2]{dembozeitouni} of $\log(g_X(\cdot))$,
\begin{align}
I_X(b) \triangleq  \sup_a\big[ab - \log(g_X(a))\big] \ . 
\end{align}

\begin{theorem}[Cram\'er] \label{t:cramer}
Let $S_n = \frac{1}{n} \sum_i X_i$ be the normalized sum of $n$ i.i.d. variables $X_1, \ldots, X_n$ with finite mean and rate function $I_X(b)$. Then, for every closed subset $\mathcal{F} \subset \mathbb{R}$,
\begin{align}
\P(S_n \in \mathcal{F}) \leq 2 \exp\Big(-n \inf_{b \in \mathcal{F}} I_X(b)\Big) \ ,
\end{align} and, for every open subset $\mathcal{G} \subset \mathbb{R}$,
\begin{align}
\liminf_{n\rightarrow\infty}\frac{1}{n}\log\P(S_n \in \mathcal{G}) \geq  - \inf_{b \in \mathcal{G}} I_X(b) \ . 
\end{align}
\end{theorem}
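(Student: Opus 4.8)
The plan is to establish the upper bound by the Chernoff (exponential Markov) argument and the lower bound by an exponential change of measure; this is the classical route, and since the theorem is being quoted from \cite{dembozeitouni} I will only sketch it.

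\emph{Upper bound.} Write $\mu = \E[X_1]$. I would first dispose of the case $\mu \in \mathcal{F}$: then $\inf_{b \in \mathcal{F}} I_X(b) = I_X(\mu) = 0$, since $a \mapsto a\mu - \log g_X(a)$ has vanishing derivative at $a = 0$ and $\log g_X$ is convex, so the claimed bound is just $\P(S_n \in \mathcal{F}) \le 2$, which is vacuous. If $\mu \notin \mathcal{F}$, then because $\mathcal{F}$ is closed the numbers $x_- = \sup\{b \in \mathcal{F} : b < \mu\}$ and $x_+ = \inf\{b \in \mathcal{F} : b > \mu\}$ lie in $\mathcal{F}$ whenever the corresponding sets are nonempty, and $\mathcal{F} \subseteq (-\infty, x_-] \cup [x_+, \infty)$. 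A union bound then reduces the claim to the one-sided tail estimates $\P(S_n \ge x_+) \le e^{-n I_X(x_+)}$ and $\P(S_n \le x_-) \le e^{-n I_X(x_-)}$, since $I_X(x_\pm) \ge \inf_{b \in \mathcal{F}} I_X(b)$. For the first, Markov's inequality applied to $e^{anS_n}$ gives $\P(S_n \ge x_+) \le e^{-anx_+} g_X(a)^n = \exp(-n(ax_+ - \log g_X(a)))$ for every $a \ge 0$; because $x_+ \ge \mu$, convexity of $\log g_X$ forces the maximizer of $ax_+ - \log g_X(a)$ into $[0,\infty)$, so optimizing over $a \ge 0$ recovers the full supremum $I_X(x_+)$. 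The bound on $\P(S_n \le x_-)$ is the mirror image, optimizing over $a \le 0$.

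\emph{Lower bound.} Here I would reduce to showing that for each $b \in \mathcal{G}$ and each $\delta > 0$ small enough that $(b - \delta, b + \delta) \subseteq \mathcal{G}$, $\liminf_n \frac1n \log \P(S_n \in (b - \delta, b + \delta)) \ge -I_X(b)$; taking the supremum over $b \in \mathcal{G}$ then gives the theorem. Fix such a $b$ with $I_X(b) < \infty$ and assume the supremum defining $I_X(b)$ is attained at a finite $a^\ast$, so $b = (\log g_X)'(a^\ast)$. Introduce the tilted law $\tilde{\P}$ under which $X_1, \ldots, X_n$ are i.i.d.\ with Radon--Nikodym derivative $e^{a^\ast x}/g_X(a^\ast)$ with respect to the original law; a direct computation gives $\tilde{\E}[X_1] = b$. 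Rewriting the probability as an expectation under $\tilde{\P}$ with the change-of-measure correction and using $e^{-a^\ast n S_n} \ge e^{-n(a^\ast b + |a^\ast|\delta)}$ on the event $\{S_n \in (b - \delta, b + \delta)\}$, I get $\P(S_n \in (b-\delta, b+\delta)) \ge g_X(a^\ast)^n e^{-n(a^\ast b + |a^\ast|\delta)}\, \tilde{\P}(S_n \in (b - \delta, b + \delta))$. Since $\tilde{\E}[X_1] = b$, the weak law of large numbers gives $\tilde{\P}(S_n \in (b - \delta, b + \delta)) \to 1$, and taking $\frac1n \log$, letting $n \to \infty$, and using $I_X(b) = a^\ast b - \log g_X(a^\ast)$ yields $\liminf_n \frac1n \log \P(\cdot) \ge -I_X(b) - |a^\ast|\delta$; letting $\delta \downarrow 0$ finishes.

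The step I expect to be most delicate is the lower bound when $I_X(b) < \infty$ but no finite $a^\ast$ attains the supremum, which happens when $b$ sits at the edge of the support of $X_1$ or at the boundary of the effective domain of $\log g_X$. The standard remedy is to approximate $b$ from within $\mathcal{G}$ by interior points where the tilting argument runs, or to perturb the underlying law and pass to the limit using lower semicontinuity of $I_X$. For the applications in this paper the relevant random variables (squared Gaussian-type coordinates) are regular and the values of $b$ of interest are interior, so these boundary cases do not actually arise; the statement in full generality is in \cite{dembozeitouni}.
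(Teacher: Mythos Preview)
Your sketch is correct and follows the classical Chernoff/exponential-tilting route. The paper does not actually prove this theorem; it simply cites \cite[Theorem 2.2.3]{dembozeitouni}, and your argument is essentially the proof given in that reference, so there is nothing to compare.
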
 See, for instance, \cite[Theorem 2.2.3]{dembozeitouni} for a proof. We will be particularly interested in how this bound applies to the length of i.i.d. Gaussian vectors, which corresponds to setting the $X_i$ to be Chi-square random variables (with one degree of freedom). The moment generating function for such random variables is
$g_X(a) = \frac{1}{\sqrt{1- 2a}}$ which yields a rate function of $I_X(b) = \frac{1}{2}(b - 1 - \log b )$.
\subsection{Distance Bounds} \label{s:distance}

The following lemma formalizes the notion that the squared $\ell_2$-norm of an i.i.d. Gaussian vector concentrates sharply around its variance. Thus, for large $n$, the decoding region can be restricted to a thin spherical shell. 
\begin{lemma} \label{l:gaussiannorm}
Let $\mathbf{z}$ be a length-$n$ vector with i.i.d.~zero-mean Gaussian entries of variance $N$. Then, for any $\ee > 0$, 
\begin{align}
\P\big(\| \mathbf{z} \|^2 \geq nN(1 + \ee) \big) \leq 2 \exp\left(-\frac{n}{2}\big(\ee - \log(1 + \ee) \big)\right) \nonumber
\end{align} and, for any $0 < \ee < 1$,
\begin{align}
\P\big(\| \mathbf{z} \|^2 \leq nN(1 - \ee) \big)  \leq  2 \exp\left(-\frac{n}{2}\big(-\ee - \log(1 - \ee) \big) \right). \nonumber
\end{align} 
\end{lemma}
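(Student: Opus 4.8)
\emph{Proof proposal.} The plan is to recognize $\|\mathbf z\|^2$ as $N$ times an empirical mean of $n$ i.i.d.\ chi-square random variables (with one degree of freedom) and then invoke Cram\'er's Theorem (Theorem~\ref{t:cramer}) with the rate function $I_X(b) = \frac12(b - 1 - \log b)$ recorded just below that theorem. Concretely, I would set $S_n \triangleq \frac{1}{nN}\|\mathbf z\|^2 = \frac1n \sum_{i=1}^n (z_i/\sqrt N)^2$, where $z_i$ is the $i$th entry of $\mathbf z$, so that $(z_i/\sqrt N)^2$ is chi-square with one degree of freedom. The two events in the lemma then become $\{S_n \geq 1+\ee\}$ and $\{S_n \leq 1-\ee\}$, i.e.\ $\{S_n \in \mathcal F\}$ for the closed half-lines $\mathcal F = [1+\ee,\infty)$ and $\mathcal F = (-\infty, 1-\ee]$ respectively.

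Next I would apply the upper-bound half of Cram\'er's Theorem to each half-line, which reduces each tail bound to evaluating $\inf_{b \in \mathcal F} I_X(b)$. For this I would note that $I_X$ is convex with $I_X'(b) = \frac12(1 - 1/b)$, so it has its unique minimum (value $0$) at $b = 1$, is nondecreasing on $[1,\infty)$, and is nonincreasing on $(0,1]$; moreover $I_X(b) = +\infty$ for $b \leq 0$, so for the lower tail the infimum is effectively over $(0, 1-\ee]$. Hence in both cases the infimum is attained at the finite endpoint: $\inf_{b \geq 1+\ee} I_X(b) = I_X(1+\ee) = \frac12(\ee - \log(1+\ee))$ for the upper tail, and $\inf_{0 < b \leq 1-\ee} I_X(b) = I_X(1-\ee) = \frac12(-\ee - \log(1-\ee))$ for the lower tail, the latter being well defined precisely because $0 < \ee < 1$ keeps $1-\ee$ in the domain. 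Substituting these into the Cram\'er upper bound, including its leading factor of $2$, yields exactly the two inequalities claimed.

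I do not expect any substantive obstacle here; the only points requiring care are the elementary monotonicity-and-domain analysis of $I_X$ that localizes each infimum at the endpoint, and the bookkeeping to line up the normalization $S_n$ with the stated events. As an alternative one could give a direct Chernoff-bound argument, optimizing $\E[e^{a\|\mathbf z\|^2}] e^{-a n N(1\pm\ee)}$ over $a$, which reproduces the same exponents (and even removes the factor of $2$); I would favor the Cram\'er route since it reuses machinery already set up in the paper and keeps the exposition uniform with the converse.
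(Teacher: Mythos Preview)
Your proposal is correct and follows essentially the same route as the paper's proof: normalize $\|\mathbf z\|^2/(nN)$ as an empirical mean of i.i.d.\ chi-square variables, apply Cram\'er's Theorem with the stated rate function $I_X(b)=\tfrac12(b-1-\log b)$, and substitute $b=1\pm\ee$. Your explicit monotonicity argument for locating $\inf_{b\in\mathcal F} I_X(b)$ at the endpoint is a detail the paper leaves implicit, but otherwise the two proofs coincide.
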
 See Appendix \ref{a:distanceproofs} for the proof.

Recall that the Q-function returns the probability that a scalar Gaussian random variable with mean zero and unit variance is greater than or equal to $t > 0$,
\begin{align}
Q(t) \triangleq \frac{1}{\sqrt{2 \pi}} \int_{t}^{\infty} \exp{\left(-\frac{t^2}{2}\right)} dx \ ,
\end{align} and is upper bounded as 
\begin{align}
Q(t) < \frac{1}{2} \exp\left(-\frac{t^2}{2}\right) \ .\nonumber 
\end{align} The next lemma is about the well-known fact that an i.i.d.~Gaussian vector is approximately orthogonal to any fixed vector.

\begin{figure}
\begin{center}
\includegraphics[width=2.2in]{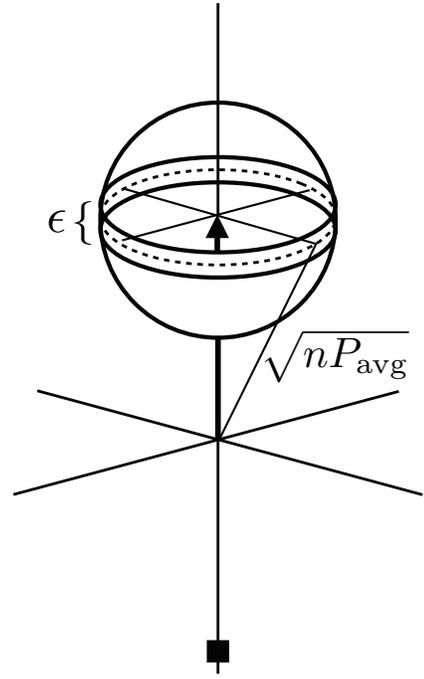}
\end{center}
\caption{From the perspective of the origin, most of the codewords are concentrated in an $\epsilon$-shell of power $\alpha \pa - \lambda$ that is offset away from the origin with power $(1-\alpha)\pa$. Thus, with high probability, any random codeword meets the power constraint.}\label{f:OffsetCB}
\end{figure}

\begin{lemma} \label{l:gaussianortho}
Let $\mathbf{z}$ be a length-$n$ vector with i.i.d.~zero-mean Gaussian entries with variance $N$ and let $\mathbf{a}$ be a length-$n$ vector with $\| \mathbf{a}\|^2 = n \alpha$ for some fixed $\alpha > 0$. Then, for any $\ec > 0$ and $n$ large enough, 
\begin{align}
\P\big( | \mathbf{a}^T \mathbf{z} | \geq \ec \| \mathbf{a} \|^2 \big) < \ec \ . 
\end{align}
\end{lemma}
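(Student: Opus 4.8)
The plan is to reduce the bound on $|\mathbf{a}^T\mathbf{z}|$ to a one-dimensional Gaussian tail estimate and then invoke the $Q$-function bound stated just above. The key observation is that for a fixed vector $\mathbf{a}$, the inner product $\mathbf{a}^T\mathbf{z} = \sum_i a_i z_i$ is a linear combination of independent zero-mean Gaussians, hence itself a zero-mean Gaussian with variance $N\|\mathbf{a}\|^2 = nN\alpha$. Therefore $\mathbf{a}^T\mathbf{z}/\sqrt{nN\alpha}$ is a standard normal, and
\begin{align}
\P\big(|\mathbf{a}^T\mathbf{z}| \geq \ec\|\mathbf{a}\|^2\big) = 2\,Q\!\left(\frac{\ec\|\mathbf{a}\|^2}{\sqrt{nN\alpha}}\right) = 2\,Q\!\left(\frac{\ec\, n\alpha}{\sqrt{nN\alpha}}\right) = 2\,Q\!\left(\ec\sqrt{\frac{n\alpha}{N}}\right). \nonumber
\end{align}

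First I would record the distributional fact above (a one-line computation using independence of the $z_i$). Next I would substitute $\|\mathbf{a}\|^2 = n\alpha$ and simplify the argument of $Q$ as shown, obtaining the clean expression $2Q(\ec\sqrt{n\alpha/N})$. Then I would apply the Gaussian tail bound $Q(t) < \tfrac12\exp(-t^2/2)$ from the excerpt to get
\begin{align}
\P\big(|\mathbf{a}^T\mathbf{z}| \geq \ec\|\mathbf{a}\|^2\big) < \exp\!\left(-\frac{\ec^2 n\alpha}{2N}\right). \nonumber
\end{align}
Since $\ec,\alpha,N$ are all fixed and positive, the right-hand side decays exponentially in $n$, so it is eventually smaller than $\ec$; choosing $n$ large enough that $\exp(-\ec^2 n\alpha/(2N)) < \ec$ (explicitly, $n > \frac{2N}{\ec^2\alpha}\log(1/\ec)$ when $\ec<1$, and trivially for all $n$ when $\ec\ge1$) completes the argument.

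There is essentially no obstacle here — the statement is deliberately weak (the bound $<\ec$ rather than an explicit exponential rate), and the whole point is just that $|\mathbf{a}^T\mathbf{z}|$ is negligible compared to $\|\mathbf{a}\|^2$ for large $n$. The only thing to be mildly careful about is the direction of the inequality when passing from "exponentially small" to "$<\ec$": this requires $n$ to be large relative to $\ec$, which is exactly why the lemma is stated with the qualifier "for $n$ large enough" (depending on $\ec$ and $\alpha$). One could alternatively bypass the $Q$-function and use Cram\'er's theorem (Theorem~\ref{t:cramer}) applied to the i.i.d.\ summands $a_i z_i$, but the direct Gaussian computation is cleaner since the sum is exactly Gaussian.
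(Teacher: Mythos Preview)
Your proposal is correct and is essentially identical to the paper's proof: the paper also observes that $\mathbf{a}^T\mathbf{z}$ is a scalar Gaussian, writes the probability as $2Q(t/(\sqrt{N}\|\mathbf{a}\|))$, applies the bound $Q(t)<\tfrac12 e^{-t^2/2}$, and arrives at the same inequality $\P(|\mathbf{a}^T\mathbf{z}|\ge\ec\|\mathbf{a}\|^2)<\exp(-\ec^2 n\alpha/(2N))$, which is then driven below $\ec$ for large $n$.
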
 See Appendix \ref{a:distanceproofs} for the proof.

In Figure \ref{f:OffsetCB}, the codebook is illustrated from the perspective of the origin. Using the above lemma, it can be shown that all but a vanishing fraction of codewords have power close to $\pa$ and are nearly orthogonal with respect to any fixed vector. We now characterize how far away a codeword plus noise is from the red alert codeword with high probability.

\begin{lemma}\label{l:distance}
For any $\ec > 0$ and $n$ large enough, the distance from the red alert codeword to the codeword for a standard message, $w \in \{1,2,\ldots,e^{nR}\}$, plus noise is at least $L$ with high probability,
\begin{align}
&\P\left(\| -\mathbf{x}(0) + \mathbf{x}(w) + \mathbf{z} \| \geq L\right) > 1 - \delta \\
&L = \sqrt{n \Big( \pp + \pa + N + 2 \sqrt{\pp(1-\alpha)\pa} - \eb - \ec\Big)} \nonumber \ . 
\end{align}
\end{lemma}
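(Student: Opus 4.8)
The plan is to substitute the code construction into $-\mathbf{x}(0)+\mathbf{x}(w)+\mathbf{z}$, peel off its deterministic component, and apply the two Gaussian concentration results (Lemmas~\ref{l:gaussiannorm} and~\ref{l:gaussianortho}) to what remains. Since $\mathbf{x}(0)=-\sqrt{\pp}\,\mathbf{1}$ and $\mathbf{x}(w)=\sqrt{(1-\alpha)\pa}\,\mathbf{1}+\mathbf{v}(w)$, we have
\[
-\mathbf{x}(0)+\mathbf{x}(w)+\mathbf{z} = c\,\mathbf{1}+\mathbf{u},\qquad c\triangleq\sqrt{\pp}+\sqrt{(1-\alpha)\pa},\quad \mathbf{u}\triangleq\mathbf{v}(w)+\mathbf{z},
\]
where $\mathbf{u}$ has i.i.d.\ zero-mean Gaussian entries of variance $\sigma^2\triangleq\alpha\pa-\eb+N$ because $\mathbf{v}(w)$ and $\mathbf{z}$ are independent (note $\sigma^2>N>0$ by the construction, which forces $\alpha\pa-\eb>0$). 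Expanding the squared norm,
\[
\|c\,\mathbf{1}+\mathbf{u}\|^2 = nc^2 + 2c\,\mathbf{1}^T\mathbf{u}+\|\mathbf{u}\|^2,\qquad nc^2 = n\big(\pp+(1-\alpha)\pa+2\sqrt{\pp(1-\alpha)\pa}\big).
\]

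Next I would lower bound the two stochastic terms on a high-probability event. For the cross term, apply Lemma~\ref{l:gaussianortho} with the fixed vector $\mathbf{1}$ (so $\|\mathbf{1}\|^2=n$): for any auxiliary constant $\ec'>0$ and $n$ large enough, $|\mathbf{1}^T\mathbf{u}|<n\ec'$ with probability at least $1-\ec'$. For the norm term, apply the lower-tail estimate of Lemma~\ref{l:gaussiannorm} to $\mathbf{u}$: for any $0<\ee<1$, $\|\mathbf{u}\|^2\geq n\sigma^2(1-\ee)$ with probability at least $1-2\exp(-\frac{n}{2}(-\ee-\log(1-\ee)))$. A union bound then gives, with probability at least $1-\ec'-2\exp(-\frac{n}{2}(-\ee-\log(1-\ee)))$,
\[
\|c\,\mathbf{1}+\mathbf{u}\|^2 \geq n\big[\pp+(1-\alpha)\pa+2\sqrt{\pp(1-\alpha)\pa}+\sigma^2-2c\ec'-\sigma^2\ee\big] = n\big[\pp+\pa+N+2\sqrt{\pp(1-\alpha)\pa}-\eb-(2c\ec'+\sigma^2\ee)\big].
\]

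It remains to fix the auxiliary constants. Given $\ec$, choose $\ee$ and $\ec'$ as explicit small multiples of $\ec$ (for instance $\ee=\ec/(2\sigma^2)$ and $\ec'=\min\{\ec/4,\ \ec/(4c)\}$) so that $2c\ec'+\sigma^2\ee\leq\ec$, and then take $n$ large enough that the ``$n$ large enough'' hypothesis of Lemma~\ref{l:gaussianortho} is met and $2\exp(-\frac{n}{2}(-\ee-\log(1-\ee)))\leq\ec/4$; the union-bound failure probability is then at most $\ec$. On the complementary event the last display gives $\|c\,\mathbf{1}+\mathbf{u}\|^2\geq L^2$, i.e.\ $\|-\mathbf{x}(0)+\mathbf{x}(w)+\mathbf{z}\|\geq L$, which is the assertion. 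The argument is essentially bookkeeping; the only place that calls for any care is choosing $\ec'$ and $\ee$ as functions of $\ec$ \emph{before} letting $n\to\infty$, so that the slack condition $2c\ec'+\sigma^2\ee\leq\ec$ and the failure-probability condition hold simultaneously.
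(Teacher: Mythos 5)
Your proposal is correct and follows essentially the same route as the paper: expand $\|-\mathbf{x}(0)+\mathbf{x}(w)+\mathbf{z}\|^2$ into the deterministic term $n(\sqrt{\pp}+\sqrt{(1-\alpha)\pa})^2$, the cross term $2(\sqrt{\pp}+\sqrt{(1-\alpha)\pa})\mathbf{1}^T(\mathbf{v}(w)+\mathbf{z})$, and the norm term $\|\mathbf{v}(w)+\mathbf{z}\|^2$, then control the latter two via Lemmas~\ref{l:gaussianortho} and~\ref{l:gaussiannorm} and a union bound. The only difference is that you spell out the choice of auxiliary tolerances explicitly, whereas the paper simply splits the slack as $\ec/2 + \ec/2$ across the two stochastic terms.
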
 See Appendix \ref{a:distanceproofs} for the proof.

\subsection{Angle Bounds}\label{s:angle}

We now upper bound the $n$-dimensional angle between a fixed vector and the same vector plus i.i.d. Gaussian noise.  

\begin{lemma} \label{l:gaussianangle}
Let $\mathbf{z}$ be a length-$n$ vector with i.i.d.~zero-mean Gaussian entries with variance $N$ and let $\mathbf{a}$ be a length-$n$ vector with $\| \mathbf{a}\|^2 = n \alpha$ for some fixed $\alpha > 0$. For any $\ec > 0$ and $n$ large enough, the probability that the angle between $\mathbf{a}$ and $\mathbf{a} + \mathbf{z}$ exceeds $\cos^{-1}\left(\sqrt{\frac{\alpha}{\alpha + N}}\right) + \ec$ is upper bounded by $\ec$,
\begin{align}
\P\Bigg(\measuredangle(\mathbf{a},\mathbf{a}+\zb) \geq \cos^{-1}\bigg( \sqrt{\frac{\alpha}{\alpha + N}} \ \bigg) + \ec\Bigg) < \ec \ . 
\end{align}
\end{lemma}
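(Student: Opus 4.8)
The plan is to reduce the angle bound to the two concentration facts already established: the Gaussian-norm concentration of Lemma~\ref{l:gaussiannorm} and the near-orthogonality of Lemma~\ref{l:gaussianortho}. Write $\cos\measuredangle(\ab,\ab+\zb) = \frac{\ab^T(\ab+\zb)}{\|\ab\|\,\|\ab+\zb\|} = \frac{\|\ab\|^2 + \ab^T\zb}{\|\ab\|\,\|\ab+\zb\|}$. Since $\cos^{-1}(\cdot)$ is decreasing, the event that the angle exceeds $\cos^{-1}(\sqrt{\alpha/(\alpha+N)}) + \ec$ is exactly the event that $\cos\measuredangle(\ab,\ab+\zb) < \cos\big(\cos^{-1}(\sqrt{\alpha/(\alpha+N)}) + \ec\big)$. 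So it suffices to show that with probability at least $1 - \ec$ the cosine is at least $\sqrt{\alpha/(\alpha+N)}$ minus a quantity that can be made smaller than the gap between $\sqrt{\alpha/(\alpha+N)}$ and $\cos(\cos^{-1}(\sqrt{\alpha/(\alpha+N)}) + \ec)$.

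First I would control the three ingredients of the cosine on a good event. Pick a small auxiliary parameter $\eta > 0$ (to be chosen as a function of $\ec$, $\alpha$, $N$ at the end). By Lemma~\ref{l:gaussiannorm} applied to $\zb$, with probability at least $1 - 4\exp(-cn)$ we have $\|\zb\|^2 \le nN(1+\eta)$; combined with $\|\ab\|^2 = n\alpha$ and the triangle/parallelogram expansion $\|\ab+\zb\|^2 = \|\ab\|^2 + 2\ab^T\zb + \|\zb\|^2$, this gives an upper bound $\|\ab+\zb\|^2 \le n\alpha + 2\ab^T\zb + nN(1+\eta)$. By Lemma~\ref{l:gaussianortho}, with probability at least $1 - \eta$ we have $|\ab^T\zb| \le \eta\|\ab\|^2 = \eta n\alpha$, so on the intersection of these events $\|\ab+\zb\|^2 \le n\big(\alpha(1+2\eta) + N(1+\eta)\big)$. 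Plugging back in,
\begin{align}
\cos\measuredangle(\ab,\ab+\zb) = \frac{\|\ab\|^2 + \ab^T\zb}{\|\ab\|\,\|\ab+\zb\|} \ge \frac{n\alpha(1-\eta)}{\sqrt{n\alpha}\,\sqrt{n(\alpha(1+2\eta)+N(1+\eta))}} = \frac{\alpha(1-\eta)}{\sqrt{\alpha(\alpha(1+2\eta)+N(1+\eta))}} \ . \nonumber
\end{align}
As $\eta \to 0$ the right side converges to $\sqrt{\alpha/(\alpha+N)}$, so for $\eta$ small enough (depending only on $\alpha$, $N$, $\ec$) it exceeds $\cos\big(\cos^{-1}(\sqrt{\alpha/(\alpha+N)}) + \ec\big)$; this uses that $\cos$ is strictly decreasing near the relevant argument (the argument stays in $[0,\pi]$, and in fact strictly below $\pi/2$ for $\ec$ small, but the monotonicity of $\cos^{-1}$ makes the reduction clean regardless). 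Finally, the good event has probability at least $1 - \eta - 4\exp(-cn)$, which for $n$ large enough is at least $1 - \ec$, completing the proof.

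The only mildly delicate point is the continuity/monotonicity bookkeeping at the end — making sure that a small multiplicative perturbation of $\sqrt{\alpha/(\alpha+N)}$ translates into a small additive perturbation of the angle, uniformly enough to beat the fixed $\ec$. This is routine since $\cos^{-1}$ is uniformly continuous on $[-1,1]$, so no real obstacle arises; the argument is essentially a deterministic estimate wrapped around two off-the-shelf concentration lemmas.
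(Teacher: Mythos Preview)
Your proposal is correct and follows essentially the same approach as the paper: write the cosine of the angle as $\frac{\|\ab\|^2+\ab^T\zb}{\|\ab\|\,\|\ab+\zb\|}$, control the cross term via Lemma~\ref{l:gaussianortho} and the norm $\|\ab+\zb\|$ via Lemmas~\ref{l:gaussiannorm} and~\ref{l:gaussianortho}, and then pass back to the angle by monotonicity/continuity of $\cos^{-1}$. If anything, your version spells out the expansion $\|\ab+\zb\|^2=\|\ab\|^2+2\ab^T\zb+\|\zb\|^2$ and the $\eta\to 0$ continuity step more explicitly than the paper does.
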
 See Appendix \ref{a:angleproofs} for the proof.

\begin{figure}
\begin{center}
\includegraphics[width=2.7in]{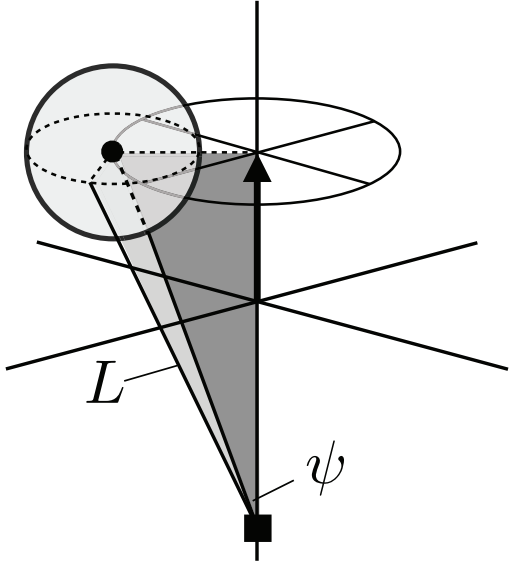}
\end{center}
\caption{With high probability, a Gaussian codeword, a Gaussian noise vector, and the red alert codeword vector are all nearly orthogonal to each other. Conditioning on this event, we can derive the minimum distance $L$ and the angle $\psi$ between a codeword plus noise and the red alert codeword.}\label{f:3DAngles4}
\end{figure}

In Figure \ref{f:3DAngles4}, we have depicted the distance $L$ and the angle $\psi$ from the red alert codeword to a standard codeword plus noise. Notice that both the noise and the codewords are (nearly) orthogonal to the axis along which the red alert codeword lies. 

Now consider a cone centered on the red alert codeword that contains a standard codeword plus noise with high probability. The next lemma upper bounds the required half-angle for the cone.

\begin{lemma}\label{l:angle}
Let $\mathcal{V}_n(\mathbf{x}(0),\mathbf{0},\psi)$ denote the cone centered on the red alert codeword with axis running towards the origin and half-angle $\psi$. For any $\ec > 0$, $w \in \{1,2,\ldots,e^{nR}\}$, and $n$ large enough, if the half-angle $\psi$ is greater than or equal to \begin{align}
 \sin^{-1}\left(\sqrt{\frac{\alpha \pa + N - \eb}{\pp + \pa + 2 \sqrt{\pp (1-\alpha)\pa} + N - \eb}}\right) + \ec \nonumber
\end{align} then the cone contains the codeword for message $w$ plus noise with high probability, i.e.,
\begin{align}
\P\Big(\mathbf{x}(w) + \mathbf{z} \in \mathcal{V}_n\big(\mathbf{x}(0),\mathbf{0},\psi \big)\Big) > 1 - \ec \ . 
\end{align}
\end{lemma}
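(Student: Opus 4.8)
The plan is to express the angle between the red alert codeword $\mathbf{x}(0)$ and a standard codeword plus noise $\mathbf{x}(w)+\mathbf{z}$ in terms of the quantities already controlled by Lemmas \ref{l:gaussiannorm}--\ref{l:gaussianangle}, and then bound it geometrically. Recall $\mathbf{x}(0) = -\sqrt{\pp}\,\mathbf{1}$ and $\mathbf{x}(w) = \sqrt{(1-\alpha)\pa}\,\mathbf{1} + \mathbf{v}(w)$, where $\mathbf{v}(w)$ is i.i.d. Gaussian with variance $\alpha\pa - \eb$. The right object to look at is the triangle formed by the origin $\mathbf{0}$, the red alert codeword $\mathbf{x}(0)$, and the point $\mathbf{x}(w)+\mathbf{z}$: the cone $\mathcal{V}_n(\mathbf{x}(0),\mathbf{0},\psi)$ has its apex at $\mathbf{x}(0)$ and axis pointing towards $\mathbf{0}$, so the relevant angle is the one at vertex $\mathbf{x}(0)$ in this triangle, i.e., $\measuredangle(\mathbf{0}-\mathbf{x}(0),\,(\mathbf{x}(w)+\mathbf{z})-\mathbf{x}(0)) = \measuredangle(\sqrt{\pp}\,\mathbf{1},\,\mathbf{x}(w)+\mathbf{z}+\sqrt{\pp}\,\mathbf{1})$.

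First I would condition on the high-probability ``good event'' from the earlier lemmas: (i) $\mathbf{v}(w)$ is nearly orthogonal to $\mathbf{1}$ and has squared norm close to $n(\alpha\pa - \eb)$ (Lemmas \ref{l:gaussiannorm}, \ref{l:gaussianortho}); (ii) $\mathbf{z}$ is nearly orthogonal to $\mathbf{1}$ and to $\mathbf{v}(w)$, with squared norm close to $nN$ (same lemmas, plus an orthogonality bound between two independent Gaussians, which follows from Lemma \ref{l:gaussianortho} applied conditionally); (iii) consequently $\mathbf{v}(w)+\mathbf{z}$ is nearly orthogonal to $\mathbf{1}$ with squared norm close to $n(\alpha\pa - \eb + N)$. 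On this event, decompose $(\mathbf{x}(w)+\mathbf{z}) - \mathbf{x}(0) = (\sqrt{(1-\alpha)\pa} + \sqrt{\pp})\,\mathbf{1} + (\mathbf{v}(w)+\mathbf{z})$ into its component along $\mathbf{1}$ (length $\approx \sqrt{n}(\sqrt{(1-\alpha)\pa}+\sqrt{\pp})$) and its component orthogonal to $\mathbf{1}$ (length $\approx \sqrt{n(\alpha\pa - \eb + N)}$). Since the cone axis $\mathbf{0}-\mathbf{x}(0) = \sqrt{\pp}\,\mathbf{1}$ points exactly along $\mathbf{1}$, the angle $\psi$ of interest satisfies
\begin{align}
\sin\psi \;=\; \frac{\text{(orthogonal component length)}}{\text{(total length)}} \;\approx\; \sqrt{\frac{\alpha\pa + N - \eb}{(\sqrt{(1-\alpha)\pa}+\sqrt{\pp})^2 + \alpha\pa + N - \eb}} \;=\; \sqrt{\frac{\alpha\pa + N - \eb}{\pp + \pa + 2\sqrt{\pp(1-\alpha)\pa} + N - \eb}},
\end{align}
using $(\sqrt{(1-\alpha)\pa}+\sqrt{\pp})^2 = \pp + (1-\alpha)\pa + 2\sqrt{\pp(1-\alpha)\pa}$ and combining $(1-\alpha)\pa + \alpha\pa = \pa$ in the denominator. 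This is exactly the $\sin^{-1}(\cdot)$ expression in the statement, so taking $\psi$ at least $\ec$ larger absorbs the approximation errors.

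The routine part is the error bookkeeping: each of the concentration statements above holds with probability at least $1 - \ec'$ for a suitable $\ec'$, there are finitely many of them (a constant number, independent of $n$), so a union bound gives total failure probability at most $\ec$ once we choose $\ec'$ small enough and $n$ large enough; and the map from the (norm, inner-product) data to the angle $\psi$ via $\arcsin$ is continuous, so small perturbations in those quantities produce a perturbation in $\psi$ that is at most $\ec$ for $n$ large. The main obstacle I anticipate is handling the cross term $\mathbf{v}(w)^T\mathbf{z}$ cleanly: $\mathbf{v}(w)$ is itself random (not a fixed vector), so Lemma \ref{l:gaussianortho} cannot be applied verbatim. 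The fix is to condition on $\mathbf{v}(w)$ first — given $\mathbf{v}(w)$ with $\|\mathbf{v}(w)\|^2 \approx n(\alpha\pa - \eb)$, the vector $\mathbf{z}$ is still i.i.d. Gaussian and Lemma \ref{l:gaussianortho} applies to the fixed vector $\mathbf{v}(w)$ — and then average over the high-probability event for $\mathbf{v}(w)$. A secondary subtlety is that the ``orthogonal component'' length is not exactly $\sqrt{\|\mathbf{v}(w)+\mathbf{z}\|^2 - (\text{component along }\mathbf{1})^2}$ unless we have also controlled $\|\mathbf{v}(w)+\mathbf{z}\|^2$ itself; this again follows by expanding $\|\mathbf{v}(w)+\mathbf{z}\|^2 = \|\mathbf{v}(w)\|^2 + 2\mathbf{v}(w)^T\mathbf{z} + \|\mathbf{z}\|^2$ and using the three already-listed bounds, so it adds no new ingredient, only more terms to track.
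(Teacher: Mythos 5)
Your proposal is correct and follows essentially the same approach as the paper: both bound the angle at the apex $\mathbf{x}(0)$ by concentration of $\mathbf{v}(w)+\mathbf{z}$ along and orthogonal to $\mathbf{1}$ and arrive at the same expression (you compute $\sin\psi$ directly as orthogonal-over-total; the paper computes $\cos\psi$ from the inner-product formula and then applies $\sin^2\psi+\cos^2\psi=1$). The cross-term subtlety you flag is in fact a non-issue and requires no conditioning argument: since $\mathbf{v}(w)$ and $\mathbf{z}$ are independent i.i.d.\ Gaussian vectors, their sum $\mathbf{v}(w)+\mathbf{z}$ is itself an i.i.d.\ Gaussian vector with per-coordinate variance $\alpha\pa - \eb + N$, so Lemmas~\ref{l:gaussiannorm} and~\ref{l:gaussianortho} apply to it directly, which is precisely how the paper treats both the inner product $\sqrt{\pp}\,\mathbf{1}^T(\mathbf{v}(w)+\mathbf{z})$ and the norm $\|\mathbf{v}(w)+\mathbf{z}\|$.
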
 See Appendix \ref{a:angleproofs} for the proof.

\subsection{Red Alert Exponent}

Now that we know the decoding region can be confined to a conical shell, we can bound the probability of missed detection for the red alert codeword. 

\begin{lemma} \label{l:achievable} For any rate $R$, the following red alert exponent is achievable
\begin{align}
E(R)= \frac{\pp + \pa + 2\sqrt{\pp(\pa + N(1 - e^{2R}))}}{2N} - R \nonumber \ . 
\end{align}
\end{lemma}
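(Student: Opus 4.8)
The plan is to use the deterministic decision region implied by Lemmas~\ref{l:distance} and \ref{l:angle}. Fix the slack parameters $\ea,\eb,\ec>0$ (to be sent to zero at the end, with $\alpha$ determined by $\ea$ as in Section~\ref{s:codebook}) and set $\mathcal{A} = \mathcal{V}_n(\mathbf{x}(0),\mathbf{0},\psi)\cap\mathcal{T}_n(\mathbf{x}(0),L,L')$, where $\psi$ and $L$ are the half-angle and inner radius furnished by Lemmas~\ref{l:angle} and \ref{l:distance} and $L'$ is a slightly enlarged outer radius obtained from Lemma~\ref{l:gaussiannorm}. The decoder declares $\hat w\neq 0$ exactly when $\mathbf{y}\in\mathcal{A}$, in which case it returns the maximum-likelihood standard codeword; otherwise it declares $\hat w=0$. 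Since $\mathbf{x}(0)=-\sqrt{\pp}\,\mathbf{1}$ is fixed and $\psi$, $L$, $L'$ are deterministic functions of $\pp,\pa,N,R,\alpha,\ea,\eb,\ec$, the region $\mathcal{A}$ --- and hence $p_{\md}$ --- does not depend on the random codebook $\mathcal{C}$.

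First I would dispose of $p_{\fa}$ and $p_{\msg}$. For each fixed standard message $w$, Lemmas~\ref{l:distance}, \ref{l:angle}, and \ref{l:gaussiannorm} together give $\P(\mathbf{x}(w)+\mathbf{z}\in\mathcal{A})\to 1$ as $\ec,\eb\to 0$ (for $n$ large), and since $p_{\fa}$ is an \emph{average} over the standard messages this already yields $p_{\fa}\to 0$ with no union bound needed. For $p_{\msg}$, observe that after subtracting the known offset $\sqrt{(1-\alpha)\pa}\,\mathbf{1}$ the standard codewords constitute an i.i.d.\ Gaussian codebook of power $\alpha\pa-\eb$ and rate $R$ on an AWGN channel of noise variance $N$; by construction $R+\ea/2=\tfrac12\log(1+(\alpha\pa-\eb)/N)$, so $R$ is strictly below the relevant capacity and the expected (over $\mathcal{C}$) maximum-likelihood error probability vanishes, which upper bounds $\E[p_{\msg}]$. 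An averaging (or expurgation) argument then extracts a fixed codebook with $p_{\fa},p_{\msg}<\epsilon$ while $p_{\md}$ keeps its codebook-independent value.

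The core of the argument is the missed-detection exponent. When $w=0$ we have $\mathbf{y}-\mathbf{x}(0)=\mathbf{z}$ and $\mathbf{0}-\mathbf{x}(0)=\sqrt{\pp}\,\mathbf{1}$, so $p_{\md}=\P(\mathbf{z}\in\mathcal{A}-\mathbf{x}(0))\le\P\big(\|\mathbf{z}\|\ge L,\ \measuredangle(\mathbf{z},\mathbf{1})\le\psi\big)$. Here I would use the structural fact that for an isotropic Gaussian vector $\mathbf{z}$ the magnitude $\|\mathbf{z}\|$ and the direction $\mathbf{z}/\|\mathbf{z}\|$ are independent, with the latter uniform on the unit sphere; thus the two events are independent and $p_{\md}\le\P(\|\mathbf{z}\|^2\ge L^2)\cdot\Omega(\psi)$. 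Bounding the first factor via Lemma~\ref{l:gaussiannorm} (equivalently Cram\'er's theorem with the $\chi^2_1$ rate function $I_X(b)=\tfrac12(b-1-\log b)$) and the second via Shannon's Lemma~\ref{l:surfaceratio}, then taking $-\tfrac1n\log(\cdot)$ and letting $n\to\infty$, I get for each fixed $\ea,\eb,\ec$ a lower bound on $\liminf_n -\tfrac1n\log p_{\md}$ which, as $\ea,\eb,\ec\to0$, converges to $-\log\sin\psi+\tfrac12(\ee_0-\log(1+\ee_0))$ with $\sin^2\psi=\frac{\alpha\pa+N}{\pp+\pa+2\sqrt{\pp(1-\alpha)\pa}+N}$ and $1+\ee_0=\frac{\pp+\pa+2\sqrt{\pp(1-\alpha)\pa}+N}{N}$.

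It remains to simplify this into the claimed form. The choice of $\alpha$ gives $\alpha\pa=N(e^{2R}-1)$, hence $\alpha\pa+N=Ne^{2R}$ and $(1-\alpha)\pa=\pa+N(1-e^{2R})$, so $2\sqrt{\pp(1-\alpha)\pa}=2\sqrt{\pp(\pa+N(1-e^{2R}))}$. Substituting, the logarithmic contributions of $-\log\sin\psi$ and of $-\tfrac12\log(1+\ee_0)$ cancel, leaving exactly $E(R)=\frac{\pp+\pa+2\sqrt{\pp(\pa+N(1-e^{2R}))}}{2N}-R$; sending $\epsilon\to0$ completes the proof. I expect the difficulty to be bookkeeping rather than conceptual: one must verify $\psi<\pi/2$ and $L^2>nN$ so that Lemmas~\ref{l:surfaceratio} and \ref{l:gaussiannorm} apply in the stated regime, confirm that the slack parameters can be driven to zero without invalidating the construction, and --- the only genuinely subtle point --- exploit that $p_{\fa}$ is an average over messages so that the per-codeword angle and distance bounds suffice even though there are $e^{nR}$ standard codewords.
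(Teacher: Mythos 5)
Your proof is correct and follows essentially the same path as the paper's: the same offset/cone construction, the same decoding rule (cone intersected with a distance threshold around $\mathbf{x}(0)$), the same observation that $p_{\fa}$ is an average so no union bound over the $e^{nR}$ messages is needed, the same missed-detection bound via independence of the Gaussian norm and direction combined with Lemma~\ref{l:gaussiannorm} and Lemma~\ref{l:surfaceratio}, and the same algebraic simplification using $\alpha\pa+N=Ne^{2R}$. The only cosmetic difference is that you close the decoding region with an outer radius $L'$ where the paper leaves it unbounded in the radial direction; this only shrinks the missed-detection event and adds one more routine application of Lemma~\ref{l:gaussiannorm} to the false-alarm analysis, so nothing substantive changes.
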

\begin{proof}
Choose $\ec > 0$. In Lemma \ref{l:distance}, $L$ is a lower bound on the distance between the red alert codeword and a standard codeword plus noise. From Lemma \ref{l:angle}, we have an upper bound on the half-angle needed to capture a standard codeword plus noise in the cone $\Vm_n(\mathbf{x}(0),\mathbf{0},\psi)$ centered on the red alert codeword. If the received vector lies in the cone and is at least distance $L$ from the red alert codeword, then the decoder assumes the red alert message was not transmitted. Otherwise, it declares that the red alert message was sent. For $n$ large enough, we know that the probability that a random codeword plus noise, $\mathbf{x}(w) + \mathbf{z}$, leaves this region is at most $\epsilon$. Therefore, the probability of false alarm (averaged over the randomness in the codebook) is upper bounded by $\epsilon$.

If the received vector falls in the decoding region for standard messages, we simply subtract the offset $\sqrt{ (1-\alpha)\pa} \mathbf{1}$ and apply a maximum likelihood decoder to make an estimate of the transmitted message. Since the rate of the codebook is chosen to be slightly less than the capacity (for the power level $\alpha \pa - \eb$), it is straightforward to show that the average probability of error for a given message is at most $\epsilon$. 

Since the average false alarm probability and average error probability are small, it follows that there exists at least one fixed codebook with a small false alarm probability and average error probability. We now turn to upper bounding the probability of missed detection. Assume the red alert codeword is transmitted. Define
\begin{align}
\ee &= \frac{\pp + \pa + 2\sqrt{\pp(1-\alpha)\pa} - \eb - \ec}{N} \ . 
\end{align} where $\eb$ is specified by step 3) of the codebook construction in Section \ref{s:codebook}. Using Lemma \ref{l:gaussiannorm}, the probability that the noise pushes the red alert codeword further than $L$ (as specified in Lemma \ref{l:distance}) can be upper bounded by
\begin{align}
\P\big(\| \mathbf{x}(0) + \mathbf{z} \| > L\big) &= \P\big(\| \mathbf{x}(0) + \mathbf{z} \|^2 > nN(1 + \ee)\big)  \\ 
&\leq \exp\left(-\frac{n}{2}\big(\ee - \log(1 + \ee)\big)\right) \ . 
\end{align} The probability that the received vector falls into the cone of half-angle $\psi$ is given by the fraction of surface area of a sphere carved out by the cone. Using Lemma \ref{l:surfaceratio}, this can be calculated as 
\begin{align}
&\P\Big(\mathbf{x}(0) + \mathbf{z} \in \Vm_n\big(\mathbf{x}(0),\mathbf{0},\psi\big)\Big) \nonumber \\ & \qquad =  \frac{\sin^n\psi}{\sqrt{2\pi n} \sin \psi \cos \psi} \bigg(1 + O\left(\frac{1}{n}\right)\bigg)  \ . 
\end{align} Pulling terms into the exponent we get
\begin{align}
&\exp\bigg( -n \Big(-\log\big(\sin \psi\big) + \frac{1}{n} \log\big( \sqrt{2\pi n} \sin \psi \cos \psi\big) \nonumber\\
&\qquad\qquad~~~ + O\left(1/n\right)  \Big) \bigg) \ . 
\end{align} For $n$ large enough, we get that the probability is upper bounded by $\exp\Big( -n \big( -\log(\sin\psi) - \ec \big)  \Big)$.  

Since the noise is an i.i.d. Gaussian vector, its magnitude and direction are independent. Therefore, the probability of missed detection is upper bounded as
\begin{align}
p_{MD} &=\P\big(\| \mathbf{x}(0) + \mathbf{z} \| > L\big) \ \P\Big(\mathbf{x}(0) + \mathbf{z} \in \Vm_n\big(\mathbf{x}(0),\mathbf{0},\psi\big)\Big) \nonumber \\ &\leq  \exp\left(-\frac{n}{2}\Big(\ee - \log(1 + \ee) -2\log(\sin\psi) - 2\ec \Big)  \right) \nonumber
\end{align} for $n$ large enough. For $\eb$ and $\ec$ small enough and $n$ large enough, the exponent $\frac{\ee}{2} - \frac{1}{2}\log(1 + \ee) - \log(\sin\psi) - \ec$ can be made equal to
\begin{align}
&\frac{\pp + \pa + 2\sqrt{\pp(1-\alpha)\pa}}{2N} \nonumber \\ &~~-~ \frac{1}{2}\log\left(\frac{\pp + \pa + 2\sqrt{\pp(1-\alpha)\pa + N}}{N} \right)~  \nonumber\\ & ~~+~  \frac{1}{2}\log\left(\frac{\pp + \pa + 2\sqrt{\pp(1-\alpha)\pa + N}}{\alpha \pa + N} \right) - \ea \nonumber \\&=\frac{\pp + \pa + 2\sqrt{\pp(1-\alpha)\pa}}{2N} \nonumber \\ & ~~-~ \frac{1}{2}\log\left(1 + \frac{\alpha \pa}{N} \right) - \ea \nonumber \\&=\frac{\pp + \pa + 2\sqrt{\pp(1-\alpha)\pa}}{2N} - R - \ea \nonumber \ . 
\end{align} Finally, we can solve for $\alpha$ in terms of $R$ to get $\alpha = \frac{N}{\pa}(e^{2R} - 1)$. Substituting this into the expression above yields the desired result.
\end{proof}

Note that at $R = 0$, the coherent gain $2\sqrt{\pp(\pa + N(1 - e^{2R}))} = 2\sqrt{\pp \pa}$, which is the largest benefit we could hope for. At $R = C$, the coherent gain vanishes.

\begin{remark} We can interpret our achievability result from a hypothesis testing perspective. Let $\mathcal{H}_0$ denote the event that a standard codeword is transmitted and let $\mathcal{H}_1$ denote the event that the red alert codeword is transmitted. Under $\mathcal{H}_0$, the entries of $\mathbf{y}$ are i.i.d. according to a Gaussian distribution with mean $\sqrt{(1-\alpha)\pa}$ and variance $\alpha \pa + N$. Under $\mathcal{H}_1$, the entries are i.i.d. Gaussian with mean $-\sqrt{\pp}$ and variance $N$. Using the Chernoff-Stein Lemma \cite[Theorem 11.8.3]{coverthomas}, we can bound the missed detection probability of the optimal hypothesis test via the KL divergence between the two distributions, $D\big(\mathcal{N}(\sqrt{(1-\alpha)\pa}, \alpha \pa +N) \big\| \mathcal{N}(-\sqrt{\pp}, N)\big)$. A bit of calculation will reveal that this KL divergence corresponds exactly to the red alert exponent. One can obtain the same exponent by plugging these distributions into the red alert exponent expression from \cite[Lemma 1]{bnz09}. However, this does not in itself constitute a proof as the results of \cite{bnz09} are for DMCs without cost constraints. \end{remark}

\section{Converse} \label{s:converse}
We now develop an upper bound on the red alert exponent. Our bound relies on the fact that, in order to recover the standard messages reliably, we must allocate a significant volume of the output space for decoding them, which contributes to the probability of missed detection. An overview of the main steps in the proof is provided below.
\begin{itemize}
\item In Lemma \ref{l:shell}, we argue that a constant fraction of the codewords live in a thin shell and strictly satisfy the power and error constraints.
\item With high probability, the standard codewords plus noise are concentrated in a thin shell. Lemma \ref{l:volume} establishes this fact as well as the minimum volume required for the decoding region to attain a given probability of error.
\item To minimize the probability of missed detection, we should pack this volume into the thin shell to maximize the distance from the red alert codeword (see Figure \ref{f:converseregion} for an illustration). Lemma \ref{l:redalert} bounds the distance and angle from the red alert codeword to the resulting decoding region (see Figure \ref{f:converse} for an illustration).
\item Finally, in Lemma \ref{l:converse}, we bound the probability that the noise carries the red alert codeword into the decoding region for the standard codewords. 
\end{itemize}

\begin{lemma}\label{l:shell}
Assume that a sequence of codebooks satisfies the average block power constraint $\pa$ and has average probability of error $p_{\msg}$ that tends to zero. Then for any $\ef >0$ and $n$ large enough, there exists a shell of width $\ef$ that contains $e^{n(R-\ef)}$ codewords, each with probability of error at most $(2/\ef) p_{\msg} $, and average power at most $\pa(1 - \ef)^{-1}$.
\end{lemma}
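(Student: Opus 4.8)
The plan is a double expurgation followed by a pigeonhole over power shells; all three required properties then fall out of Markov's inequality. Write $M = e^{nR}$ for the number of standard codewords and let $p_w \ge 0$ be the per-message error contributions, so that $p_{\msg} = \frac1M\sum_{w=1}^M p_w$.

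First I would expurgate by error. By Markov's inequality, at most an $\ef/2$ fraction of the messages satisfy $p_w > (2/\ef)\,p_{\msg}$, so discarding these leaves a set $\mathcal M_1$ with $|\mathcal M_1| \ge (1-\ef/2)M$, each surviving codeword having error at most $(2/\ef)p_{\msg}$. Since deleting codewords only decreases total power, the average power over $\mathcal M_1$ is at most $n\pa M/|\mathcal M_1| \le n\pa/(1-\ef/2)$; the point is that the error-expurgation costs only a thin sliver of power slack. A second application of Markov, now to $\|\mathbf{x}(w)\|^2$ over $\mathcal M_1$ with threshold $n\pa/(1-\ef)$, shows that a fixed fraction — at least $\frac{\ef/2}{1-\ef/2}$ — of $\mathcal M_1$, hence at least $\tfrac{\ef}{2}M$ codewords, simultaneously have error at most $(2/\ef)p_{\msg}$ and squared norm at most $n\pa/(1-\ef)$; call this set $\mathcal A$.

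Finally I would pigeonhole $\mathcal A$ over shells. Partition the normalized-power range $[0,\pa/(1-\ef)]$ into $K = \lceil \pa/(\ef(1-\ef))\rceil$ sub-intervals of width $\ef$, where crucially $K$ does not depend on $n$. Some such shell contains at least $|\mathcal A|/K \ge \ef M/(2K)$ of the codewords in $\mathcal A$, and since $\ef/(2K)$ is a fixed positive constant while $e^{n(R-\ef)} = M e^{-n\ef}$ is a vanishing fraction of $M$, for $n$ large this shell holds strictly more than $e^{n(R-\ef)}$ codewords. Keeping any $e^{n(R-\ef)}$ of them finishes the proof: each has error at most $(2/\ef)p_{\msg}$, lies in a shell of width $\ef$, and has power at most the shell's upper endpoint, which is at most $n\pa/(1-\ef)$, so the retained sub-codebook has average power at most $\pa(1-\ef)^{-1}$.

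The argument is essentially bookkeeping, so the one place I would be careful is the ordering and accounting of the two expurgations: the error-expurgation must be performed first and must be charged only the mild penalty $\pa/(1-\ef/2)$, because it is exactly this slack that leaves a constant fraction of codewords below $\pa/(1-\ef)$ for the power-expurgation; and the pigeonhole step relies on the shell count $K$ being independent of $n$ (or, under a different convention for the word ``width'', at most polynomial in $n$), so that a constant — respectively inverse-polynomial — fraction of the $e^{nR}$ codewords still dominates $e^{n(R-\ef)}$.
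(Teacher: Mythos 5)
Your argument is correct and follows essentially the same route as the paper's proof: Markov-type expurgations on error and on power, then a pigeonhole over shells to find one holding $e^{n(R-\ef)}$ codewords. The paper performs the two expurgations in parallel over the full codebook and intersects (keeping $(1-\ef/2)e^{nR}$ low-error and $\ef e^{nR}$ low-power codewords), whereas you do them sequentially (error first, then power within the error-good set), but the resulting count $(\ef/2)e^{nR}$ of jointly good codewords is the same. The one point to align is what ``shell of width $\ef$'' means: in the paper it is a shell $\mathcal{T}_n(\mathbf{0}, r, r+\ef)$ whose \emph{radius} width is $\ef$ (this is exactly the object consumed by Lemma~\ref{l:volume}), so the paper's pigeonhole is over $\Theta(\sqrt{n\pa}/\ef)$ shells, not the constant number of shells in your primary calculation, which partitions \emph{normalized squared power} into width-$\ef$ intervals. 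You flag this ambiguity at the end and correctly observe that a polynomial-in-$n$ shell count still leaves some shell with at least $\Theta(1/\sqrt{n})\cdot e^{nR} \gg e^{n(R-\ef)}$ codewords for $n$ large; adopting the radius-width convention from the outset would make your proof match the paper's exactly.
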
 See Appendix \ref{a:converseproofs} for the proof.

\begin{lemma}\label{l:volume}
Assume that, for some $\gamma, \rho > 0$, $e^{n(R-\gamma)}$ codewords, each with probability of error at most $(2/\ef)p_{\msg}$ lie in the shell $\mathcal{T}_n(\mathbf{0},\sqrt{n \rho},\sqrt{n \rho} + \gamma)$. Then, for $n$ large enough, the decoding region for these codewords must include a subset of the noise-inflated shell $\mathcal{T}_n\big(\mathbf{0},\sqrt{n(\rho +N -\gamma)},\sqrt{n(\rho +N +\gamma)}\big)$ with volume at least
\begin{align*}
V_{\text{MIN}} = e^{n(R-\gamma)} \frac{(n\pi (N - \gamma))^{n/2}}{\Gamma\left(\frac{n}{2} + 1\right)}  \ . 
\end{align*} 
\end{lemma}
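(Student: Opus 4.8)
### Proof Proposal for Lemma \ref{l:volume}

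The plan is to lower-bound the volume of the decoding region by a covering/packing argument built on the concentration facts already established. The starting point is that each of the $e^{n(R-\gamma)}$ codewords $\mathbf{x}(w)$ sits in the shell $\mathcal{T}_n(\mathbf{0},\sqrt{n\rho},\sqrt{n\rho}+\gamma)$ and, by hypothesis, has probability of error at most $(2/\ef)p_{\msg}$, which tends to zero. First I would invoke Lemma \ref{l:gaussiannorm} (applied to the noise $\mathbf{z}$ with variance $N$) to conclude that, conditioned on $\mathbf{x}(w)$ being sent, the received vector $\mathbf{y} = \mathbf{x}(w) + \mathbf{z}$ lands in the noise-inflated shell $\mathcal{T}_n\big(\mathbf{0},\sqrt{n(\rho+N-\gamma)},\sqrt{n(\rho+N+\gamma)}\big)$ with probability approaching one (here one uses the triangle inequality on $\|\mathbf{x}(w)+\mathbf{z}\|$ together with the fact that $\mathbf{x}(w)$ and $\mathbf{z}$ are nearly orthogonal, since $\mathbf{z}$ is i.i.d.\ Gaussian — this is the content of Lemma \ref{l:gaussianortho}). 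Combining this with the vanishing error probability, the decoding region $\mathcal{D}_w$ for codeword $w$ must capture most of the conditional probability mass of $\mathbf{y}$ that falls inside this inflated shell.

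The second step is the volume lower bound for a single $\mathcal{D}_w$. Since $\mathbf{z}$ has an isotropic density that is radially monotone decreasing, the conditional law of $\mathbf{y}$ given $\mathbf{x}(w)$ is a spherically symmetric Gaussian centered at $\mathbf{x}(w)$; for such a measure, the region of a given probability content with smallest volume is a ball centered at $\mathbf{x}(w)$. Hence any set $\mathcal{D}_w$ with $\P(\mathbf{y}\in\mathcal{D}_w \mid \mathbf{x}(w)) \geq 1 - o(1)$ must have volume at least that of a ball $\mathcal{B}_n(\mathbf{x}(w), r_n)$ whose Gaussian content under $\mathcal{N}(\mathbf{0},N\mathbf{I})$ is $1-o(1)$. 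By Lemma \ref{l:gaussiannorm} again, taking $r_n = \sqrt{n(N-\gamma)}$ suffices for the content to go to one. Therefore $\V(\mathcal{D}_w) \geq \V\big(\mathcal{B}_n(\mathbf{0},\sqrt{n(N-\gamma)})\big) = \dfrac{(n\pi(N-\gamma))^{n/2}}{\Gamma(n/2+1)}$ using (\ref{e:spherevol}).

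The third step is to aggregate over the $e^{n(R-\gamma)}$ codewords. Since a correct decoder must use disjoint decoding regions $\mathcal{D}_w$, and since we have shown each is (up to the small error slack) contained in the common inflated shell $\mathcal{T}_n\big(\mathbf{0},\sqrt{n(\rho+N-\gamma)},\sqrt{n(\rho+N+\gamma)}\big)$, the total volume consumed within that shell is at least $e^{n(R-\gamma)}$ times the single-codeword bound, giving exactly $V_{\text{MIN}}$. A mild technical point is that the per-codeword ball $\mathcal{B}_n(\mathbf{x}(w),\sqrt{n(N-\gamma)})$ is not itself entirely inside the inflated shell; what is genuinely inside (with vanishing loss) is the portion of $\mathcal{D}_w$ carrying the probability mass, so I would phrase the argument in terms of $\V(\mathcal{D}_w \cap \mathcal{T}_n(\cdots))$ and absorb the discrepancy into the $o(1)$ / $\gamma$ slack, which is why the statement uses $N-\gamma$ rather than $N$ and a shell rather than an exact sphere.

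The main obstacle I anticipate is making the ``smallest-volume region of given probability content is a centered ball'' step fully rigorous together with the bookkeeping that confines each $\mathcal{D}_w$ to the inflated shell — i.e., simultaneously controlling (a) the error slack $(2/\ef)p_{\msg}$, (b) the concentration of $\|\mathbf{z}\|^2$, and (c) the near-orthogonality of $\mathbf{z}$ to the fixed vector $\mathbf{x}(w)$ — so that all three error terms can be folded into the single parameter $\gamma$ for $n$ large. Everything else is a direct application of Lemma \ref{l:gaussiannorm}, Lemma \ref{l:gaussianortho}, the volume formula (\ref{e:spherevol}), and the disjointness of decoding regions.
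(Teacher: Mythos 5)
Your overall approach matches the paper's: show that the received vector concentrates in the noise-inflated shell, argue each decoding region must capture nearly all of the conditional probability mass there, invoke the isoperimetric fact that the minimum-volume set of given probability content under a rotationally-invariant noise law is a ball centered on the codeword, lower-bound that ball's radius via Lemma~\ref{l:gaussiannorm}, and aggregate over the disjoint decoding regions. You also correctly supply a detail the paper passes over lightly: the concentration of $\|\mathbf{y}\| = \|\mathbf{x}(w)+\mathbf{z}\|$ around $\sqrt{n(\rho+N)}$ really needs control of the cross term $\mathbf{x}(w)^T\mathbf{z}$, i.e.\ Lemma~\ref{l:gaussianortho} in addition to Lemma~\ref{l:gaussiannorm}.

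However, the key step of your second paragraph is stated backwards, and the stated form would not yield the bound you need. You write that ``taking $r_n = \sqrt{n(N-\gamma)}$ suffices for the content to go to one.'' By Lemma~\ref{l:gaussiannorm}, $\P\big(\|\mathbf{z}\|^2 \le n(N-\gamma)\big)$ decays \emph{exponentially} in $n$ for $\gamma > 0$; a ball of radius $\sqrt{n(N-\gamma)}$ centered at $\mathbf{x}(w)$ therefore has Gaussian content $o(1)$, not $1-o(1)$. If your claim were literally true, the minimum-volume ball of content $1-o(1)$ would have radius at most $\sqrt{n(N-\gamma)}$, which would give an \emph{upper} bound on $\V(\mathcal{D}_w)$ — the wrong direction. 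The correct logic, which is what the paper uses, is the contrapositive: the minimum-volume set of content $1-o(1)$ is a ball of some radius $r_n^*$; because the ball of radius $\sqrt{n(N-\gamma)}$ has content $o(1)$ while $\mathcal{D}_w$ (restricted to the shell) must capture $1-o(1)$, we must have $r_n^* > \sqrt{n(N-\gamma)}$ for $n$ large, and hence $\V(\mathcal{D}_w) \ge \V\big(\mathcal{B}_n(\mathbf{0},\sqrt{n(N-\gamma)})\big)$. Your subsequent aggregation over disjoint regions and the bookkeeping remark in your final paragraph are fine once this inversion is repaired.
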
 See Appendix \ref{a:converseproofs} for the proof.

\begin{lemma}\label{l:redalert}
Assume that a sequence of codebooks has rate $R$ and an average probability of error $p_{\msg}$ that tends to zero as $n$ increases. Then, for sufficiently small $\ea$ and $n$ large enough, the probability of missed detection $p_{\md}$ is lower bounded by the probability that the noise vector has squared norm $\| \mathbf{z} \|^2$ between $L^2 + n\ea$ and $L^2 + 2n\ea$ and lies at an angle $\measuredangle(\mathbf{z},\mathbf{1})$ between $\psi(1 - \ea)$ and $\psi (1 -2 \ea)$ where
\begin{align}
L^2 ={n\bigg(\pp + \pa + N +  2 \sqrt{\pp\big(\pa + N(1 - e^{2R})\big)}  \bigg)}\nonumber \\
\psi = \sqrt{\frac{Ne^{2R}}{\pp + \pa + N + 2\sqrt{\pp(\pa + N(1-e^{2R}))}}}\ . \nonumber
\end{align}
\end{lemma}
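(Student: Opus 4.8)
The plan is to combine Lemmas~\ref{l:shell} and~\ref{l:volume} with an isoperimetric (bathtub/rearrangement) argument and then change variables from the received point $\mathbf{x}(0)+\mathbf{z}$ back to the noise vector $\mathbf{z}$. First, fix the target $\ea$ and choose an auxiliary $\ef>0$ small relative to $\ea$. Lemma~\ref{l:shell} produces a thin shell of ``radius-squared'' $n\rho$ (with $\rho$ essentially at most $\pa$) holding $e^{n(R-\ef)}$ codewords whose error probabilities vanish, and Lemma~\ref{l:volume} then forces the standard-message decoding region $\mathcal{D}$ to contain a subset $\mathcal{R}$ of the noise-inflated shell $\mathcal{T}_n\big(\mathbf{0},\sqrt{n(\rho+N-\ef)},\sqrt{n(\rho+N+\ef)}\big)$ with $\V(\mathcal{R})\ge V_{\text{MIN}}=e^{n(R-\ef)}(n\pi(N-\ef))^{n/2}/\Gamma(n/2+1)$. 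Since $p_{\md}=\P(\mathbf{x}(0)+\mathbf{z}\in\mathcal{D})\ge\P(\mathbf{x}(0)+\mathbf{z}\in\mathcal{R})$, it suffices to lower bound the right-hand side uniformly over all admissible $\mathcal{R}$.

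Because the density of $\mathbf{x}(0)+\mathbf{z}$ at $\mathbf{y}$ is strictly decreasing in $\|\mathbf{y}-\mathbf{x}(0)\|$, the least-probability subset of the noise-inflated shell having volume $V_{\text{MIN}}$ is, to leading order on a thin shell, the portion farthest from $\mathbf{x}(0)$, i.e.\ a spherical cap. Rotating coordinates so that $\mathbf{x}(0)=-\sqrt{\pp}\,\mathbf{1}$ — legitimate because $\mathbf{z}$ is isotropic, and because shrinking $\|\mathbf{x}(0)\|$ below $\sqrt{n\pp}$ only pulls the cap closer and thus raises $p_{\md}$ — this cap is $\mathcal{V}_n(\mathbf{0},\mathbf{1},\theta^*)\cap\mathcal{T}_n(\cdots)$ with half-angle $\theta^*$ fixed by $\V\big(\mathcal{V}_n(\mathbf{0},\mathbf{1},\theta^*)\cap\mathcal{T}_n(\cdots)\big)=V_{\text{MIN}}$. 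A cap of half-angle $\theta$ occupies a fraction $\Omega(\theta)$ — subexponentially equal to $\sin^n\theta$ by Lemma~\ref{l:surfaceratio} — of any shell, so, combined with the ball-volume formula, $\sin^2\theta^*$ equals $e^{2(R-\ef)}(N-\ef)/(\rho+N+\ef)$ up to subexponential factors, which tends to $e^{2R}N/(\rho+N)$ as $\ef\to0$. Hence $p_{\md}\ge\P\big(\mathbf{x}(0)+\mathbf{z}\in\mathcal{V}_n(\mathbf{0},\mathbf{1},\theta^*)\cap\mathcal{T}_n(\cdots)\big)$.

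It remains to read this event off in terms of $\|\mathbf{z}\|^2$ and $\measuredangle(\mathbf{z},\mathbf{1})$. Writing $\mathbf{y}=\mathbf{x}(0)+\mathbf{z}=\mathbf{z}-\sqrt{\pp}\mathbf{1}$, $\|\mathbf{z}\|^2=ns$, and $\phi=\measuredangle(\mathbf{z},\mathbf{1})$, one has $\|\mathbf{y}\|^2=n(s+\pp-2\sqrt{\pp s}\cos\phi)$ and a companion expression for $\measuredangle(\mathbf{y},\mathbf{1})$ in terms of $(s,\phi)$. Setting $\|\mathbf{y}\|^2=n(\rho+N)$ and $\measuredangle(\mathbf{y},\mathbf{1})=\theta^*$ and solving in the limit $\ef\to0$ (with $\rho\to\pa$ and $\sin^2\theta^*\to e^{2R}N/(\pa+N)$) produces precisely $\|\mathbf{z}\|^2=L^2$ and $\sin\big(\measuredangle(\mathbf{z},\mathbf{1})\big)=\psi$ with $L,\psi$ as in the statement — indeed the identity $L^2/n-Ne^{2R}=(\sqrt{\pp}+\sqrt{\pa+N(1-e^{2R})})^2$ makes the bookkeeping collapse neatly — while nudging $s$ up from $L^2/n$ and $\phi$ down from $\psi$ carries $\mathbf{y}$ strictly into the cap. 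Consequently, for $\ef$ small enough relative to $\ea$ and $n$ large enough, the event $\{\|\mathbf{z}\|^2\in[L^2+n\ea,L^2+2n\ea]\}\cap\{\measuredangle(\mathbf{z},\mathbf{1})\in[\psi(1-\ea),\psi(1-2\ea)]\}$ is contained in $\{\mathbf{x}(0)+\mathbf{z}\in\mathcal{V}_n(\mathbf{0},\mathbf{1},\theta^*)\cap\mathcal{T}_n(\cdots)\}$, which gives the claim. (The isotropy of $\mathbf{z}$ makes the magnitude and angle conditions independent, which is what renders this probability computable in Lemma~\ref{l:converse}.)

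The main obstacle is the quantitative bookkeeping in the last two steps. The rearrangement inequality is exact only for a zero-width shell, so one must verify that the shell width, the $O(1/\sqrt n)$ fluctuation of $\|\mathbf{z}\|^2$, and the $O(1/n)$ term in Lemma~\ref{l:surfaceratio} are simultaneously absorbed by the $n\ea$ and $\psi\ea$ slack built into the target event — this forces a coordinated choice of $\ef$ as a small multiple of $\ea$ and is why the interval endpoints are offset by $n\ea$ and $\psi\ea$ rather than anchored at $L^2$ and $\psi$. One must also handle uniformly the fact that Lemma~\ref{l:shell} only upper bounds the shell parameter $\rho$: here one uses that the target probability is monotone in $\rho$ (larger $\rho$ pushes $L$ up and $\psi$ down, and by isotropy the probability factors into a Chi-square tail term decreasing in $L$ and a solid-angle term increasing in $\psi$, so the whole probability decreases in $\rho$), whence the weakest bound, and thus the form that holds for every code, is the one at the largest admissible $\rho$, namely $\pa$.
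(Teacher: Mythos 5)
Your proposal follows essentially the same approach as the paper's proof: invoke Lemmas~\ref{l:shell} and~\ref{l:volume} to pin down $V_{\text{MIN}}$ inside the noise-inflated shell, argue by rearrangement (the paper's $\mathcal{G}_d$ set) that the least-favorable decoding region is the cap farthest from $\mathbf{x}(0)$, bound that cap by a cone of half-angle $\theta^*$ centered at the origin with $\sin^2\theta^* \approx e^{2R}N/(\rho+N)$, and then read off $L$ and $\psi$ from the geometry at the worst-case $\rho = \pa$ and worst-case red-alert placement $\|\mathbf{x}(0)\|^2 = n\pp$. The only cosmetic differences are that you obtain $L$ and $\psi$ by changing variables from $\mathbf{y}$ back to $\mathbf{z}$ and solving a $2\times 2$ system, whereas the paper introduces the edge point $\mathbf{v}^*$ and a companion point $\mathbf{v}$ on the cone-and-shell intersection and bounds $L^* \leq L$, $\psi^* \geq \psi$ directly; and you argue monotonicity in $\rho$ at the end, whereas the paper simply substitutes the maximal $\rho = \pa(1-\ef)^{-1}$ up front. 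Both arguments leave the final shell-width/slack bookkeeping (coordinating $\ef$, $\ed$, $\ef$ relative to $\ea$) at the same level of informality, and you correctly flag that step as the delicate one. One small point worth tightening: you identify $\psi$ with $\sin(\measuredangle(\mathbf{z},\mathbf{1}))$, which is the algebraically clean reading, but the lemma statement and the subsequent use in Lemma~\ref{l:converse} treat $\psi$ itself as the angle appearing in $\mathcal{V}_n(\mathbf{0},\mathbf{1},\psi(1-\ea))$; this mismatch is already present in the paper and is harmless at the level of exponents, but your write-up should at least note which convention you adopt so the containment claim at the end is stated unambiguously.
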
 
\begin{proof}
Consider the standard codewords from a red alert codebook. From Lemma \ref{l:shell}, for any $\ef > 0$ and $n$ large enough, at least $e^{n(R-\ef)}$ codewords with power at most $\pa(1 - \ef)^{-1}$ and probability of error at most $(2/\ef)p_{\msg}$ must lie in a shell $\mathcal{T}_n(\mathbf{0},\sqrt{n\rho}, \sqrt{n\rho} + \gamma)$ for some $\rho > 0$. From  Lemma \ref{l:volume}, it follows that the decoding region for these codewords falls within the noise-inflated shell $\mathcal{T}_n\big(\mathbf{0},\sqrt{n(\rho +N -\gamma)},\sqrt{n(\rho +N +\gamma)}\big)$ and has volume at least $V_{\text{MIN}}$.

\begin{figure}[h]
\begin{center}
\psset{unit=0.8mm}
\begin{pspicture}(-52,-44)(53,44)

\psframe[fillstyle=vlines,hatchcolor=gray,linestyle=none](25,-44)(55,44)
\pswedge[linewidth=1pt,linestyle=none,linecolor=white,fillcolor=white,fillstyle=solid](-50,0){87}{-32}{32}

\psframe[fillcolor=black,fillstyle=solid](-48,-2)(-52,2)

\psline[linewidth=1.5pt]{<-}(-50,2)(-40,15)
\rput(-40,23){Red Alert}
\rput(-40,18){Codeword}

\psarc[linestyle=solid,linewidth=2pt](15,0){38}{-67.9}{67.9}

\pscircle[linewidth=1pt](15,0){38}

\psarc[linestyle=solid,linewidth=2pt](-50,0){87}{-23.7}{-15.5}
\psarc[linestyle=solid,linewidth=2pt](-50,0){87}{15.5}{23.7}
\psarc[linestyle=solid,linewidth=2pt](15,0){30}{-52.1}{52.1}
\psarc[linestyle=solid,linewidth=1pt](15,0){30}{52.1}{-52.1}

\psarc[linestyle=dashed,linewidth=1.5pt](-50,0){87}{-30}{30}
\psline[linestyle=dashed,linewidth=1.5pt]{->}(-50,0)(37,0)
\rput(0,3){$d$}

\rput(49,0){\Large{$\mathcal{R}$}}
\rput(40,-40){\Large{$\mathcal{G}_d$}}

\psline[linewidth=1.5pt]{->}(-28,-28)(-15,-15)
\rput(-28,-32){Noise-Inflated Shell}

\end{pspicture}
\end{center}
\caption{To attain the desired probability of error, the decoding region for the standard codewords must include a subset of the noise-inflated shell with volume at least $V_{\text{MIN}}$. To minimize the probability of missed detection, we place this volume as far from the red alert codeword (square) as possible. Let $\mathcal{G}_d$ denote the set of points at distance $d$ or greater from the red alert codeword. The decoding region $\mathcal{R}$ is the intersection of $\mathcal{G}_d$ and the shell, where $d$ is chosen to capture volume $V_{\text{MIN}}$.}\label{f:converseregion}
\end{figure}
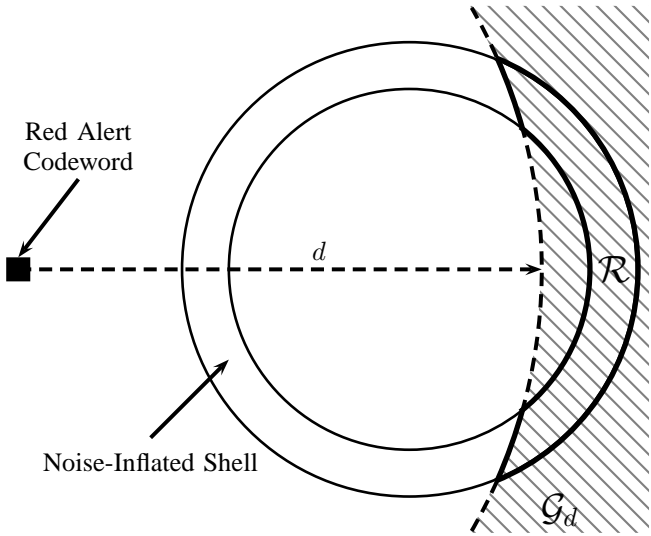

To get our lower bound, we need to pack this volume in the noise-inflated shell such that it minimizes $p_{\md}$. Since the noise vector is i.i.d. Gaussian, the probability that the red alert codeword is pushed to a certain point is determined solely by a decreasing function of the distance. Let $\mathcal{G}_d$ denote the set of all points at distance $d$ or greater from the red alert codeword
\begin{align}
\mathcal{G}_d = \{ \mathbf{z} : \| \mathbf{z} - \mathbf{x}(0) \| \geq d \} \ . 
\end{align}  The optimal volume packing corresponds to the intersection of the set $\mathcal{G}_d$ and the noise shell\\ $\mathcal{T}_n\big(\mathbf{0}, \sqrt{n(\rho +N- \ed)} , \sqrt{n(\rho +N+ \ed)}\big)$ with $d$ chosen such that the volume of the set is equal to $V_{\text{MIN}}$. Let $\mathcal{R}$ denote the resulting region and see Figure \ref{f:converseregion} for an illustration.

Let $\mathcal{R}_{\text{EDGE}}$ denote the set of points in $\mathcal{R}$ that sit at the minimum distance to the red alert codeword,
\begin{align}
\mathcal{R}_{\text{EDGE}} = \Big\{ \mathbf{u} \in \mathcal{R} : \| \mathbf{u} -  \mathbf{x}(0)\| = \min_{\mathbf{w} \in \mathcal{R}} \| \mathbf{w} - \mathbf{x}(0)\| \Big\} \ , \nonumber
\end{align} and let $\mathbf{v}^* \in \mathcal{R}_{\text{EDGE}}$ be any of these points. Let $L^*$ and $\psi^*$ denote the distance and angle from the red alert codeword $\mathbf{x}(0)$ to $\mathbf{v}^*$. We now seek to bound these quantities through a bound on the angle from the origin to $\mathbf{v}^*$.

\begin{figure}[h]
\begin{center}
\psset{unit=0.8mm}
\begin{pspicture}(-52,-38)(53,38)

\psframe[fillcolor=black,fillstyle=solid](-48,-2)(-52,2)

\psarc[linestyle=solid,linewidth=2pt](15,0){38}{-67.9}{67.9}

\pscircle[linewidth=1pt](15,0){38}
\pscircle[linewidth=1pt,linestyle=none,fillstyle=solid,fillcolor=white](15,0){30}

\pswedge[linewidth=1pt,linestyle=solid,linecolor=black,fillcolor=gray!15!white,fillstyle=solid](15,0){38}{-59}{59}
\psarc[linestyle=solid,linewidth=2pt](-50,0){87}{-23.7}{-15.5}
\psarc[linestyle=solid,linewidth=2pt](-50,0){87}{15.5}{23.7}
\psarc[linestyle=solid,linewidth=2pt](15,0){30}{-52.1}{52.1}
\psarc[linestyle=solid,linewidth=1pt](15,0){30}{52.1}{-52.1}

\psarc[linestyle=dashed,linewidth=1pt](-50,0){92}{-20}{20}
\psarc[linestyle=dashed,linewidth=1pt](-50,0){89}{-20}{20}
\psline[linestyle=dashed,linewidth=1pt](-50,0)(34.5,32.8)
\psline[linestyle=dashed,linewidth=1pt](-50,0)(34.5,-32.8)

\pscircle[linewidth=1.5pt,fillstyle=solid,fillcolor=white](15,0){1.3}

\psline(-50,0)(45,0)

\psarc(-50,0){19}{0}{20.5}
\rput(-28,4){{$\psi$}}
\psarc(15,0){8}{0}{59}
\rput(25,6){{$\theta$}}

\rput(0,22.5){{$L$}}
\rput{59}(20,14){\scriptsize{$\sqrt{n(\rho + N + \ed)}$}}

\rput(29.5,35){\pscircle[fillstyle=solid,fillcolor=black](0,0){1.2}}
\rput(30.5,39){$\mathbf{v}^*$}
\rput(34.5,32.8){\pscircle[fillstyle=solid,fillcolor=black](0,0){1.2}}
\rput(34.5,36.3){$\mathbf{v}$}

\rput{16.5}(35.7,23.9){\psframe[linestyle=none,fillstyle=solid,fillcolor=black!50!white](0,0)(3,7)}
\rput{-16.5}(35.7,-23.9){\psframe[linestyle=none,fillstyle=solid,fillcolor=black!50!white](0,-7)(3,0)}
\rput(49,0){\Large{$\mathcal{R}$}}

\end{pspicture}
\end{center}
\caption{Illustration of successive lower bounds on the probability of missed detection, $p_{\md}$. The red alert codeword is represented by a square and the origin by a circle. The decoding region $\mathcal{R}$ is denoted by a thick line. We would like to characterize the distance $L^{*}$ and angle $\psi^{*}$ to the edge of $\mathcal{R}$, represented by the point $\mathbf{v}^{*}$. To do so, we consider a cone with half-angle $\theta$ (shaded region) with the same volume as $\mathcal{R}$. The intersection of this cone with the outer surface of $\mathcal{R}$ contains a point $\mathbf{v}$ at distance $L > L^{*}$ and angle $\psi < \psi^*$. In our final lower bound, we only consider the event that the received vector lies in the subset of $\mathcal{R}$ at distance slightly larger than $L$ and within an angle between $\psi - \epsilon$ to $\psi - 2 \epsilon$ from the red alert codeword (illustrated by dark patches).
}\label{f:converse}
\end{figure}
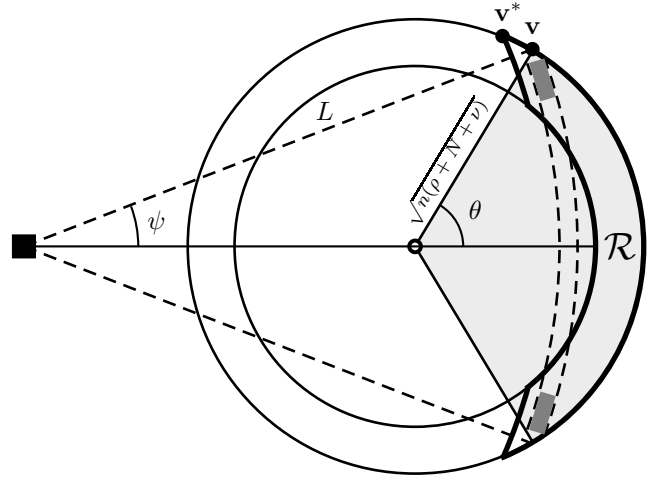

Let $\theta^*$ denote the half-angle of a cone, centered on the origin that contains the region $\mathcal{R}$ (and thus includes $\mathbf{v}^*$). The volume of this cone must be at least equal to that of $\mathcal{R}$ since $\mathcal{R}$ is a subset of the noise shell. Therefore, $\theta^*$ is lower bounded by the half-angle $\theta$ of a cone whose volume is equal to the volume of $\mathcal{R}$ (see Figure \ref{f:converse} for an illustration). Combining (\ref{e:spherevol}) and Lemma \ref{l:surfaceratio}, for $n$ large enough, the volume of this cone is upper bounded by \begin{align}
\frac{(n\pi(\rho+N + \ed) \sin^2 \theta)^{n/2}}{\Gamma\left(\frac{n}{2} + 1\right)} \ .
\end{align} Now, since we require this quantity to exceed $V_{\text{MIN}}$, we can lower bound $\theta$ by
\begin{align}
\theta \geq \sin^{-1}\left(e^{R-\ed} \sqrt{\frac{N - \ed}{\rho + N + \ed}} \right) \ .
\end{align} We can further lower bound $\theta$ by setting $\rho$ to its maximum value $\pa (1- \gamma)^{-1}$,
\begin{align}
\theta \geq  \sin^{-1}\left(e^{R-\ed} \sqrt{\frac{N - \ed}{\pa(1 - \gamma)^{-1}+N + \ed}}\right)  \ . 
\end{align} Thus, for any $\ec > 0$, $\ed$ and $\gamma$ small enough, and $n$ large enough, $\theta^*$ is lower bounded as follows
\begin{align}
\theta^* \geq \theta \geq   \sin^{-1}\left(e^{R-\ec} \sqrt{\frac{N}{\pa + N}} \right) \label{e:theta} \ .
\end{align}

The distance $L^*$ from $\mathbf{x}(0)$ to $\mathbf{v}^*$ is upper bounded by the distance $L$ to a point $\mathbf{v}$ that lies on the intersection of the outer shell (at distance $\sqrt{n(\pa(1 - \gamma)^{-1} + N + \ed)}$ from the origin) and the cone of half-angle $\theta$. Without loss of generality, assume that the red alert codeword is placed at $\mathbf{x}(0) = -\mu \mathbf{1}$ for some $\mu > 0$.\footnote{It is straightforward to prove that placing the red alert codeword exactly at the origin is suboptimal.} The direction of the red alert codeword is not important since we will always fill the noise shell relative to this direction. Then $L^2$ is at least 
\begin{align}
&\left(\sqrt{n\mu} + \cos\theta \sqrt{n(\pa(1-\gamma)^{-1} + N +\ed)}\right)^2 \nonumber\\ &~~~~+~\left(\sin \theta \sqrt{n(\pa(1-\gamma)^{-1}+ N +\ed)}\right)^2  \ .
\end{align} For any $\ec > 0$, $\gamma$ and $\ed$ small enough, and $n$ large enough, this quantity is itself upper bounded by
\begin{align}
n\left(\mu + \pa + N + 2 \cos \theta \sqrt{\mu(\pa + N)} + \ec\right)  \nonumber \ .
\end{align} 

The half-angle $\psi$ of a cone, centered on the red alert codeword, that contains the point $\mathbf{v}$ is lower bounded by 
\begin{align}
\sin^{-1} \left(\frac{\sin \theta \sqrt{n(\pa(1-\gamma)^{-1} + N + \nu)}}{\sqrt{n\left(\mu + \pa + N + 2 \cos \theta \sqrt{\mu(\pa + N)} + \ec\right)}}\right) \nonumber
\end{align} which, for any $\eta > 0$, $\gamma$, $\ed$, and $\ec$ small enough, and $n$ large enough is itself lower bounded by
\begin{align}
\sin^{-1} \left(\frac{(1 - \eta)\sin \theta \sqrt{\pa + N}}{\sqrt{\mu + \pa + N + 2 \cos \theta \sqrt{\mu(\pa + N)}}}\right) \nonumber \ . 
\end{align} 

The probability of missed detection decreases if the distance $L^*$ from $\mathbf{x}(0)$ to $\mathbf{v}^*$ is increased. The angle $\psi^*$ will simultaneously decrease. Thus, by setting $\mu = \pp$, we further lower bound the probability of missed detection. Using the relation $\sin^2 \theta + \cos^2 \theta = 1$ combined with (\ref{e:theta}), we find that $\cos \theta \leq \sqrt{1 - e^{2R-2\ec} \frac{N}{N + \pa}}$. Plugging in $\mu$ and $\theta$, we obtain the following upper bound on $(L^*)^2$:
\begin{align}
n\Big(\pp + \pa + N +  2 \sqrt{\pp(\pa + N(1 - e^{2R-2\ec})}) + \ec\Big)  \nonumber 
\end{align} and the following lower bound on $\psi^*$:
\begin{align}
\sin^{-1} \sqrt{\frac{(1 - \eta)^2N e^{2R - 2\ec}}{\pp + \pa + N +  2 \sqrt{\pp(\pa + N(1 - e^{2R - 2 \ec})}) }} \nonumber 
\end{align}

Finally, it follows that, for $\ea$ small enough (but greater than $0$ for finite $n$) and $n$ large enough, the optimal packing contains all points from squared distance $L^2 + n \ea$ to $L^2 + 2n \ea$ from the red alert codeword and angle $\psi (1 - \ea)$ to $\psi (1 - 2 \ea)$ where $L$ and $\psi$ are as in the statement of the theorem. Thus, the probability of missed detection is lower bounded by the event that the noise falls into this region.
\end{proof}

\begin{lemma}\label{l:converse} For any rate $R$, the red alert exponent is upper bounded by
\begin{align}
E(R) \leq \frac{\pp + \pa + 2\sqrt{\pp(\pa + N(1 - e^{2R}))}}{2N} - R \nonumber \ . 
\end{align}
\end{lemma}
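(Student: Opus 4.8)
Here is the plan. The whole converse boils down to turning the geometric lower bound of Lemma~\ref{l:redalert} into an exponential estimate and then letting the slack parameters vanish. Suppose $E_{\text{ALERT}}(R)$ is an achievable red alert exponent, so that there is a sequence of rate-$R$ codebooks with $p_{\fa}\to 0$, $p_{\msg}\to 0$, and $\liminf_{n}\big(-\tfrac1n\log p_{\md}\big)\ge E_{\text{ALERT}}(R)$. Since $p_{\msg}\to 0$, Lemma~\ref{l:redalert} applies to this sequence: for every sufficiently small $\ea>0$ and all $n$ large enough, $p_{\md}\ge\P(\mathbf{z}\in\mathcal{A}_{n,\ea})$, where $\mathcal{A}_{n,\ea}$ is the set of noise vectors $\mathbf{z}$ with $L^2+n\ea\le\|\mathbf{z}\|^2\le L^2+2n\ea$ and $\psi(1-\ea)\le\measuredangle(\mathbf{z},\mathbf{1})\le\psi(1-2\ea)$, with $L$ and $\psi$ as in that lemma. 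It therefore suffices to lower bound $\P(\mathbf{z}\in\mathcal{A}_{n,\ea})$ exponentially.

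First I would use that for an isotropic Gaussian the norm $\|\mathbf{z}\|$ and the direction $\mathbf{z}/\|\mathbf{z}\|$ are independent, the latter uniform on the sphere; since the radial constraint in $\mathcal{A}_{n,\ea}$ involves only $\|\mathbf{z}\|$ and the angular constraint only $\mathbf{z}/\|\mathbf{z}\|$, the probability factors into a radial part $\P\big(L^2+n\ea\le\|\mathbf{z}\|^2\le L^2+2n\ea\big)$ and an angular part $\P\big(\psi(1-\ea)\le\measuredangle(\mathbf{z},\mathbf{1})\le\psi(1-2\ea)\big)$. For the radial part, $\|\mathbf{z}\|^2/N$ is a sum of $n$ i.i.d.\ Chi-square variables, and if we set $\beta$ so that $1+\beta=L^2/(nN)$ (an $n$-independent constant larger than the mean $1$), the event contains the fixed open interval $\big(1+\beta+\tfrac{\ea}{N},\,1+\beta+\tfrac{2\ea}{N}\big)$; Cram\'er's Theorem (the open-set lower bound in Theorem~\ref{t:cramer}) with the Chi-square rate function $I_X(b)=\tfrac12(b-1-\log b)$, which is increasing for $b>1$, gives $\limsup_n\big(-\tfrac1n\log(\text{radial})\big)\le I_X\big(1+\beta+\tfrac{\ea}{N}\big)$. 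For the angular part, the probability equals $\Omega(\psi(1-\ea))-\Omega(\psi(1-2\ea))$; since $\psi\in(0,\pi/2)$ (which follows from $R\le C$), Lemma~\ref{l:surfaceratio} shows the subtracted solid angle is exponentially smaller than the leading one, so the difference equals $\Omega(\psi(1-\ea))\,(1-o(1))$ and $\lim_n\big(-\tfrac1n\log(\text{angular})\big)=-\log\sin(\psi(1-\ea))$.

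Combining the two estimates and using $E_{\text{ALERT}}(R)\le\liminf_n\big(-\tfrac1n\log p_{\md}\big)\le\limsup_n\big(-\tfrac1n\log\P(\mathbf{z}\in\mathcal{A}_{n,\ea})\big)$ gives, for every small $\ea>0$, $E_{\text{ALERT}}(R)\le I_X\big(1+\beta+\tfrac{\ea}{N}\big)-\log\sin(\psi(1-\ea))$. Letting $\ea\to0$ and using continuity yields $E_{\text{ALERT}}(R)\le\tfrac12\big(\beta-\log(1+\beta)\big)-\log\sin\psi$. The proof then closes with a short computation: substituting $\beta=\big(\pp+\pa+2\sqrt{\pp(\pa+N(1-e^{2R}))}\big)/N$ and $\sin^2\psi=e^{2R}/(1+\beta)$ (the value of $\psi$ supplied by Lemma~\ref{l:redalert}), the logarithmic terms cancel, and the bound collapses to $\tfrac{\beta}{2}-R=E(R)$. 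Together with Lemma~\ref{l:achievable}, this establishes Theorem~\ref{t:main}.

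I expect the main obstacle to be bookkeeping with the order of limits. Cram\'er's Theorem delivers only a $\liminf$ on $\tfrac1n\log\P$, so the argument must be arranged to compare the achievability hypothesis against $\limsup_n\big(-\tfrac1n\log\P(\mathbf{z}\in\mathcal{A}_{n,\ea})\big)$ for each fixed $\ea$, and only afterwards send $\ea\to0$; one also has to check that the radial interval and the two cone half-angles are genuinely $n$-independent (for fixed $\ea$) so these limits exist. The exponential domination of the inner cone's solid angle by the outer one, and the verification that $\psi\in(0,\pi/2)$ for all $R\le C$, are minor points but must be stated explicitly since the $-\log\sin\psi$ term carries the rate-dependent part of the exponent.
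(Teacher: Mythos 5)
Your proof is correct and follows essentially the same route as the paper's: split the event from Lemma~\ref{l:redalert} into independent radial and angular parts, lower-bound the radial probability with Cram\'er's Theorem for Chi-square variables (rate function $\tfrac12(b-1-\log b)$), lower-bound the angular probability via Lemma~\ref{l:surfaceratio}, multiply, and simplify using $\sin^2\psi = e^{2R}/(1+\beta)$ to arrive at $\tfrac{\beta}{2}-R$. Your $\liminf/\limsup$ bookkeeping at the end, $E_{\text{ALERT}}(R)\le \liminf_n\big(-\tfrac1n\log p_{\md}\big)\le \limsup_n\big(-\tfrac1n\log\P(\mathbf{z}\in\mathcal{A}_{n,\ea})\big)$, is in fact stated more carefully than the paper's displayed chain (which writes $\liminf$ with $\ge$ where a converse requires the opposite direction), but the computation and the resulting bound are identical.
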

\begin{proof}
Lemma \ref{l:redalert} established that the probability of missed detection is lower bounded by the event that the noise has squared length between $L^2 + n\ea$ to $L^2 + 2n\ea$ and angle between $\psi(1 - \ea)$ and $\psi (1 - 2\ea)$ for some $\ea$ that tends to $0$ as $n$ tends to infinity. We now lower bound the probability of this event. Define
\begin{align}
\beta =  \frac{\pp + \pa + 2\sqrt{\pp(\pa + N(1 - e^{2R}))}}{N} \ . 
\end{align}

Since the magnitude and angle of an i.i.d. Gaussian vector are independent, the probability of missed detection is lower bounded as follows:
\begin{align}
p_{\md} &\geq \P\left(L^2 + n\ea \leq \|\mathbf{z}\|^2 \leq L^2 + 2n\ea \right) \nonumber \\ & ~~~~ \cdot~ \P\Big(\mathbf{z} \in \big\{\mathcal{V}_n(\mathbf{0},\mathbf{1},\psi(1 - \ea))\setminus \mathcal{V}_n(\mathbf{0},\mathbf{1},\psi(1 - 2\ea)) \big\} \Big) \nonumber
\end{align} By Lemma \ref{l:surfaceratio}, for $n$ large enough, the second term in the product can be lower bounded by 
\begin{align}
\big(\sin(\psi (1 - \ea)^2)\big)^n - \big(\sin(\psi (1 -2 \ea))\big)^n  
\end{align} which, for $n$ large enough, is itself lower bounded by 
\begin{align}
&\big((1-\ea)^3 \sin(\psi (1 - \ea))\big)^n \\
&=\exp\Big( -n \big(-\log\big((1-\ea)^3 \sin (\psi(1 - \ea))\big)  \Big) \ . 
\end{align}
Now, substituting in the lower bound on $\psi$ from Lemma \ref{l:redalert}, we arrive at the following lower bound
\begin{align}
\exp\Big(-n \Big( -R + \frac{1}{2}\log(1 + \beta) - 3 \log(1 -\ea) \Big)\Big) \label{e:angleexp} \ . 
\end{align}

Using the upper bound on $L$ from Lemma \ref{l:redalert} and applying Theorem \ref{t:cramer} for Chi-square random variables (and noting that $\ea$ and $\ed$ go to zero as $n$ goes to infinity), it follows that 
\begin{align}
&\liminf_{n\rightarrow\infty} \bigg( -\frac{1}{n}\log\P\left(L^2 + n\ea \leq \|\mathbf{z}\|^2 \leq L^2 + 2n\ea \right) \bigg) \\
&\geq \frac{\beta}{2} - \frac{1}{2}\log(1 + \beta)\ . 
\end{align} Combining this with the lower bound on the angle event in (\ref{e:angleexp}), the exponent of the probability of missed detection is lower bounded by
\begin{align}
&\liminf_{n\rightarrow\infty} \bigg( -\frac{1}{n}\log p_{\md} \bigg) \\&~\geq  \frac{\beta}{2} - \frac{1}{2}\log(1 + \beta) + \frac{1}{2} \log(1 + \beta) - R \\
&~= \frac{\pp + \pa + 2\sqrt{\pp(\pa + N(1 - e^{2R}))}}{2N} - R  \nonumber 
\end{align}as desired.
\end{proof} 

\section{Plots}

\begin{figure}[h!]
\centering
\includegraphics[width=3.8in]{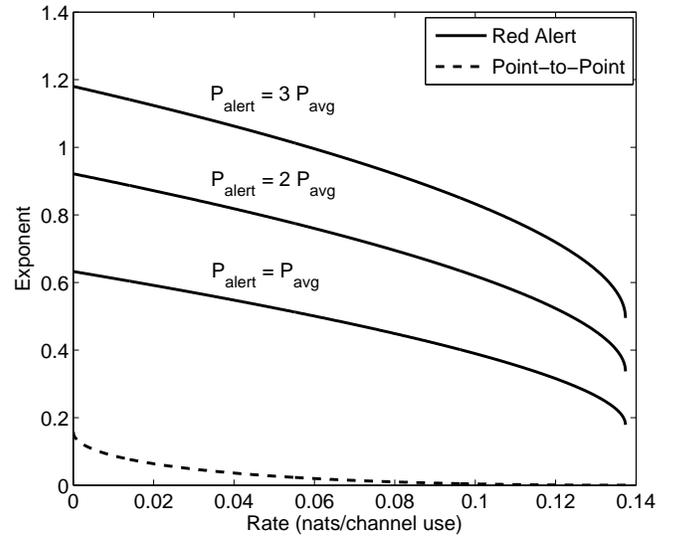}
\caption{Optimal red alert exponent for $\pa = -5$dB with $\pp = \pa, 2\pa, 3\pa$. An upper bound on the point-to-point AWGN error exponent is provided for comparison. }\label{f:redalertplot1}
\end{figure}

\begin{figure}[h!]
\centering
\includegraphics[width=3.8in]{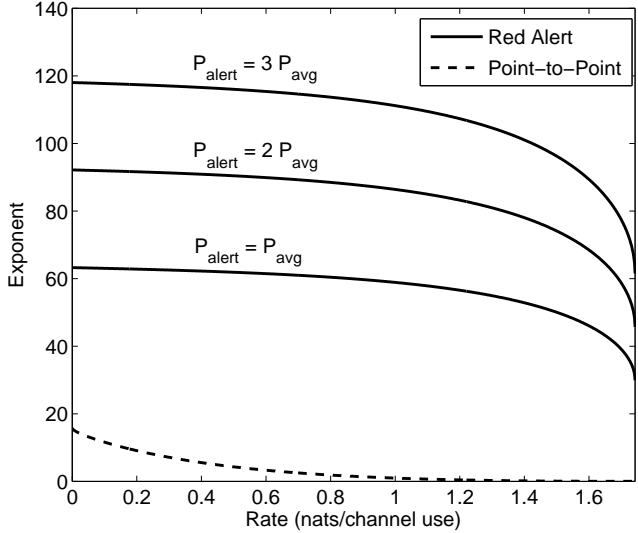}
\caption{Optimal red alert exponent for $\pa = 15$dB with $\pp = \pa, 2\pa, 3\pa$. An upper bound on the point-to-point AWGN error exponent is provided for comparison. }\label{f:redalertplot2}
\end{figure}
In Figures \ref{f:redalertplot1} and \ref{f:redalertplot2}, we have plotted the optimal red alert exponent for $\pa = -5$dB and $\pa = 15$dB, respectively, with red alert power constraints $\pp = \pa, 2\pa$, and $3 \pa$. For comparison, we have also plotted an upper bound on the AWGN point-to-point error exponent from \cite[Equation 4]{shannon59}. Notice that the red alert exponent can be quite large at capacity, even when the red alert power constraint is equal to the average power constraint.

\section{Conclusions}
We have developed sharp bounds on the error exponent for distinguishing a single special message from $2^{nR}$ standard messages over an AWGN channel. As discussed in the introduction, these bounds can be used to characterize the performance of certain data streaming architectures, where each bit must be decoded after a given delay. An interesting question for future study is how well a single special message can be protected at a given finite blocklength, i.e., understanding the limits of unequal error protection in the non-asymptotic regime \cite{ppv10}.  
\appendices
\section{Proofs for Section \ref{s:distance}} \label{a:distanceproofs}

\begin{proof}[Proof of Lemma \ref{l:gaussiannorm}]
The squared Euclidean distance is the sum of $n$ i.i.d. squared Gaussian random variables with variance $N$. Therefore, $\frac{1}{nN}\| \mathbf{z} \|^2$ is the sum of $n$ i.i.d. Chi-square random variables. Applying Theorem \ref{t:cramer} and plugging in the Chi-square rate function of $I_Z(b) = \frac{1}{2}(b - 1 - \log b)$, it follows that 
\begin{align}
\P\big(\| \mathbf{z} \|^2 \geq nNb\big) \leq 2 \exp\left(-\frac{n}{2}(b - 1- \log b)\right) \ .
\end{align} Substituting in $b = 1 + \beta$ yields the first bound and $b = 1 - \beta$ yields the second.
\end{proof}

\begin{proof}[Proof of Lemma \ref{l:gaussianortho}]
First, we write the probability that $| \mathbf{a}^T\mathbf{z} | $ is greater than $t$ in terms of the Q-function,
\begin{align}
\P\big( | \mathbf{a}^T \mathbf{z} | \geq t \big) &= 2 Q\left(\frac{t}{\sqrt{N}\| \mathbf{a} \|}\right) \\&<   \exp\left(-\frac{t^2}{2N\| \mathbf{a} \|^2}\right) \ . 
\end{align} Substituting $t = \ec \| \mathbf{a}\|^2$ yields, 
\begin{align}
\P\big( | \mathbf{a}^T \mathbf{z} | \geq  \ec \| \mathbf{a}\|^2 \big) < \exp\left(-\frac{\ec^2 n\alpha}{2N} \right),
\end{align} which can be driven arbitrarily close to zero for $n$ large enough.
\end{proof}

\begin{proof}[Proof of Lemma \ref{l:distance}]
We simply wish to bound the length of the vector from the special codeword to a standard codeword plus noise, $-\mathbf{x}(0) + \mathbf{x}(w) + \mathbf{z}$. By expanding terms, we obtain:
\begin{align}
&\| -\mathbf{x}(0) + \mathbf{x}(w) + \mathbf{z} \|^2 \\& = \left\| \left(\sqrt{\pp} + \sqrt{(1-\alpha)\pa}\right) \mathbf{1}+ \mathbf{v}(w) + \mathbf{z}\right\|^2 \\ &= \left(\sqrt{\pp} +  \sqrt{(1-\alpha)\pa}\right)^2 \mathbf{1}^T \mathbf{1}  \nonumber \\
& ~~~+ ~2\left(\sqrt{\pp}+ \sqrt{(1-\alpha)\pa}\right) \mathbf{1}^T (\mathbf{v}(w) + \mathbf{z})   \nonumber \\
& ~~~+~  \| \mathbf{v}(w) + \mathbf{z}\|^2 \ . 
\end{align} The first term is $n( \pp + (1-\alpha)\pa + 2 \sqrt{\pp (1-\alpha)\pa}$. The second term is the inner product of a fixed vector $2\left(\sqrt{\pp}+ \sqrt{(1-\alpha)\pa}\right) \mathbf{1}$ and an i.i.d. Gaussian vector $\mathbf{v}(w) + \mathbf{z}$ since $\mathbf{v}(w)$ is an element of a random Gaussian codebook. Thus, using Lemma \ref{l:gaussianortho}, it can be shown that the probability that this inner product is less than $-n\ec/2$ is at most $\ec/2$ for $n$ large enough. The third term is the squared norm of an i.i.d. Gaussian vector with mean zero and variance $\alpha \pa  + N - \eb$. From Lemma \ref{l:gaussiannorm}, it follows that $\| \mathbf{v}(w) + \mathbf{z}\|^2$ is less than $n( \alpha \pa + N - \eb - \ec/2)$ with probability at most $\ec/2$ for $n$ large enough. Combining these three bounds completes the proof.
\end{proof}

\section{Proofs for Section \ref{s:angle}} \label{a:angleproofs}

\begin{proof}[Proof of Lemma \ref{l:gaussianangle}]
The angle between $\mathbf{a}$ and $\mathbf{a}+ \mathbf{z}$ is
\begin{align}
\measuredangle(\mathbf{a},\mathbf{a}+\mathbf{z}) = \cos^{-1} \left( \frac{\mathbf{a}^T (\mathbf{a}+\mathbf{z})}{\| \mathbf{a}\| \|\mathbf{a} + \mathbf{z}\|} \right) \ .
\end{align} From Lemma \ref{l:gaussianortho}, for any $\ed > 0$ and $n$ large enough, the probability that $\mathbf{a}^T \mathbf{z} > \ed \| \ab \|^2$ is at most $\ed$. Therefore, since $\| \mathbf{a}\| = n \alpha$ we have that $\ab^T (\ab + \zb) \geq (1 + \ed) n \alpha$ with probability at most $\ed$. Combining Lemmas \ref{l:gaussiannorm} and \ref{l:gaussianortho}, we can also show that the probability that $\| \ab + \zb \| < (1 - \ed) \sqrt{n(\alpha + N)}$ is at most $\ed$ for $n$ large enough. Thus, the probability that 
\begin{align}
\frac{\mathbf{a}^T (\mathbf{a}+\mathbf{z})}{\| \mathbf{a}\| \|\mathbf{a} + \mathbf{z}\|} &< \frac{(1+\ed)n\alpha }{\sqrt{n\alpha}(1- \ed)\sqrt{n(\alpha + N)}} \\&  = \frac{1+\ed}{1-\ed} \sqrt{\frac{\alpha}{\alpha + N}} 
\end{align} is at most $2 \ed$. Choosing $\ed$ small enough yields the desired result.
\end{proof}

\begin{proof}[Proof of Lemma \ref{l:angle}]
The angle between the axis of the cone and the standard codeword plus noise is
\begin{align}
\measuredangle(-\mathbf{x}(0),-\mathbf{x}(0) + \mathbf{x}(w) + \mathbf{z}) = \cos^{-1}\left( u \right) \\
u = \frac{(-\mathbf{x}(0))^T(-\mathbf{x}(0) + \mathbf{x}(w) + \mathbf{z})}{\| \mathbf{x}(0) \| \| -\mathbf{x}(0) + \mathbf{x}(w) + \mathbf{z}\|}  \ . 
\end{align} Since $\cos^{-1}(u)$ is a decreasing function of $u$, an upper bound on the angle can be obtained by lower bounding $u$. We will do this by lower bounding the numerator and upper bounding the denominator (with high probability). Expanding the numerator yields:
\begin{align}
&(-\mathbf{x}(0))^T(-\mathbf{x}(0) + \mathbf{x}(w) + \mathbf{z}) \\
& = \sqrt{\pp}\left(\sqrt{\pp} + \sqrt{(1-\alpha)\pa}\right) \mathbf{1}^T \mathbf{1} \\
&~~~~+~  \sqrt{\pp} \ \mathbf{1}^T ( \mathbf{v}(w)+ \mathbf{z}) \ . 
\end{align} The first term is simply $n\sqrt{\pp}\big(\sqrt{\pp} + \sqrt{(1-\alpha)\pa} \big)$. The second term is the inner product of a fixed vector and an i.i.d. Gaussian vector. Thus, using Lemma \ref{l:gaussianortho}, it can be shown that for any $\ed > 0$ and $n$ large enough, the probability that the second term is less than $-\ed n$ is at most $\ed$. The denominator is composed of two terms. The first is simply $\| \mathbf{x}(0) \| = \sqrt{n\pp}$. Following the proof of Lemma \ref{l:distance}, it can be shown that the second term $\|-\mathbf{x}(0) + \mathbf{x}(w) + \mathbf{z}\|$ is greater than 
\begin{align}
\sqrt{n \left( \pp + \pa + N + 2 \sqrt{\pp(1-\alpha)\pa} - \eb + \ed\right)} \nonumber
\end{align} with probability at most $\ed$. Combining these bounds, we get that the probability that $u$ is less than 
\begin{align}
\frac{\sqrt{\pp} + \sqrt{(1-\alpha)\pa} - \ed}{\sqrt{ \pp + \pa + N + 2 \sqrt{\pp(1-\alpha)\pa} - \eb + \ed} }
\end{align} with probability at most $2\ed$. Thus, so long as the half-angle $\psi$ is greater than or equal to 
\begin{align}
\cos^{-1}\Bigg(\frac{\sqrt{\pp} + \sqrt{(1-\alpha)\pa}}{\sqrt{ \pp + \pa + 2 \sqrt{\pp(1-\alpha)\pa}+ N  - \eb} }\Bigg)  + \ec \nonumber \end{align} the cone contains $\mathbf{x}(w) + \mathbf{z}$ with probability at least $1 - \ec$ for $n$ large enough. Applying the trigonometric identity $\sin^2(\psi) + \cos^2(\psi) = 1$ completes the proof.
\end{proof}

\section{Proofs for Section \ref{s:converse}} \label{a:converseproofs}

\begin{proof}[Proof of Lemma \ref{l:shell}]Observe that at least one codeword has power at most $\pa$, otherwise the average will be larger than $\pa$. If we remove this codeword's contribution from the average, the remaining codewords have average power at most $\pa \frac{e^{nR}}{e^{nR} - 1}$. Now, we can find a codeword whose power must be at most $\pa \frac{e^{nR}}{e^{nR} - 1}$. Removing this codeword yields an average of $\pa \frac{e^{nR}}{e^{nR} - 2}$. Continuing this process, we can remove $\ef e^{nR}$ codewords that each have power at most $\pa (1 - \ef)^{-1}$. By the same argument, we can find $(1- (\gamma/2))e^{nR}$ codewords that each have probability of error at most $(2/\ef)p_{\msg}$. Therefore, at least $(\gamma/2) e^{nR}$ codewords must satisfy both these constraints simultaneously.

The selected codewords live in the sphere of radius $\sqrt{n\pa(1-\gamma)^{-1}}$. We partition this sphere into shells of width $\gamma$ each. It follows that at least one of these shells must contain $\frac{\gamma^2}{2\sqrt{n(\pa(1 - \gamma)^{-1})}} e^{nR}$ codewords. Finally, select $n$ large enough so that $e^{n(R-\gamma)} \leq \frac{\gamma^2}{2\sqrt{n(\pa(1 - \gamma)^{-1})}} e^{nR} $.
\end{proof}

\begin{proof} [Proof of Lemma \ref{l:volume}]
Assume that one of the codewords from the shell $\mathcal{T}_n(\mathbf{0},\sqrt{n \rho},\sqrt{n \rho} + \gamma)$ is transmitted. It follows from Lemma \ref{l:gaussiannorm}, that for any $\gamma > 0$ and $n$ large enough, the probability that $\| \mathbf{y} \|$ is larger than $\sqrt{n(\rho +N+ \gamma)}$ or smaller than $\sqrt{n(\rho+N-\gamma)}$ is upper bounded by $p_{\msg} \frac{1}{\gamma}$. If the noise lands outside this ``noise shell,'' then we will assume that the transmitted codeword is decoded correctly. However, each codeword still needs to capture $1 - p_{\msg} \frac{3}{\gamma}$ probability inside the shell to ensure the error probability does not exceed $p_{\msg} \frac{2}{\gamma}$. 

Now, consider the volume required for decoding a single codeword reliably. Since the noise is i.i.d. Gaussian, its probability distribution is rotationally invariant. This implies that the shape that uses the least volume to capture a given probability of error is a sphere centered on the codeword. Let $\sqrt{n \ed}$ be the radius of this sphere. By Lemma \ref{l:gaussiannorm}, if $\ed < N$, the probability that the noise falls inside this sphere goes to zero exponentially in $n$ which implies the probability of error goes to one. Therefore, for $n$ large enough, the probability of error will always exceed the desired probability of error (which is assumed to be bounded away from one). Using (\ref{e:spherevol}), we get that the decoding region of each codeword must have volume at least $\frac{(n\pi(N-\gamma))^{n/2}}{\Gamma\left(\frac{n}{2} + 1\right)}$ for any $\gamma > 0$. We find that we will need a volume of at least 
\begin{align}
V_{\text{MIN}} = e^{n(R-\gamma)} \frac{(n\pi (N - \gamma))^{n/2}}{\Gamma\left(\frac{n}{2} + 1\right)} \label{e:vmin}
\end{align} to reliably decode these codewords. 
\end{proof}

\section{Offset Codes Versus Conical Codes} \label{a:cone}

We now develop some intuition for why the offset construction
of Section~\ref{s:codebook} is a better construction than the conical construction we used
in our earlier work~\cite{nd10allerton}. The difference between these two constructions is easier to understand in a discrete setting so we will analyze the corresponding constructions for a BSC with crossover probability $p$. For ease of analysis, we will calculate rate in bits per channel use (rather than nats per channel use).

First, recall that the BSC red alert exponent can be attained using a fixed composition codebook. Specifically, each of the $2^{nR}$ codewords is drawn independently and uniformly from the set of weight-$nq$ binary sequences. If the rate is less than the induced mutual information, the average probability of error can be driven to zero
\begin{align}
R < I(X;Y) = h_B(q*p) - h_B(p) \ . 
\end{align} The red alert codeword is taken to be all the all zeros vector. The decoder runs a hypothesis test between the two possible output distributions, Bernoulli$(p)$ and Bernoulli$(q*p)$. The error exponent for the probability of missed detection is the KL divergence between the two distributions, $D(q*p \| p).$ As shown in \cite{sd08}, this is the optimal red alert exponent.

We can construct a \textit{conical code} of parameter $q > 1/2$ by first drawing $2^{n(C-\epsilon)}$ codewords i.i.d. according to a Bernoulli$(1/2)$ distribution for some $\epsilon > 0$. Let $\mathcal{C}$ denote the resulting set of codewords. To guarantee the same red alert exponent, we only keep those codewords with Hamming weight $nq$ or greater and set the red alert codeword to be the all zeros vector. We now bound the rate of this construction. Using Theorem \ref{t:cramer}, it can be shown that the probability that a Bernoulli$(1/2)$ sequence $\mathbf{x}$ has Hamming weight at least $nq$ is upper bounded as 
\begin{align}
\P({\rm{wt}}(\mathbf{x}) \geq nq) \leq 2^{-n D(q\|0.5)} \label{e:weightprob} \ . 
\end{align}

Take $\mathcal{C}_q$ to be the set of subset of codewords in $\mathcal{C}$ with Hamming weight $nq$ or greater. Using (\ref{e:weightprob}), the expected size of $\mathcal{C}_q$ is upper bounded by
\begin{align}
\E\big[ |\mathcal{C}_q|\big] &\leq 2^{n(C-D(q\|0.5) -\epsilon)} \\
&= 2^{n(1 - h_B(p) - (1 - h_B(q)) -\epsilon)} \\
&= 2^{n(h_B(q) - h_B(p) - \epsilon)} \ .
\end{align} It can be shown with a Chernoff bound that the probability $\mathcal{C}_q$ contains significantly more codewords vanishes doubly exponentially in $n$. Furthermore, it can be shown that the average probability of error of $\mathcal{C}_q$ vanishes with $n$. Therefore, the rate of the conical codebook is $h_B(q) - h_B(p)$ for a red alert exponent of $D(q*p \| p)$. 

Now, observe that $q < q*p < 1/2$ (unless either $q$ or $p$ is equal to $1/2$) so $h_B(q) < h_B(q*p)$, meaning that the rate of the conical construction is strictly less than the (optimal) constant composition construction. Intuitively, this means that the usual i.i.d. Bernoulli$(\frac{1}{2})$ construction used to approach capacity does not pack codewords of higher (or lower) weights efficiently. Constraining the weight of codewords is essential to the hypothesis test that leads to the red alert exponent. The constant composition (or offset) construction is successful since it optimizes the packing of codewords of a given weight. A similar phenomenon occurs in the AWGN setting as shown below.

\section{AWGN Conical Codes: Red Alert Exponent} \label{a:coneexponent}

For completeness, we review the AWGN conical code that we proposed in \cite{nd10allerton} and the resulting red alert exponent. The construction is comprised of three main steps:
\begin{enumerate}
\item Place the red alert codeword at the limit of the red alert power constraint, $\mathbf{x}(0) = -\sqrt{\pp}\mathbf{1}$.
\item Draw $2^{n(C - \epsilon)}$ codewords i.i.d. according to a Gaussian distribution with mean $0$ and variance $\pa - \epsilon$.
\item Of these codewords, only keep the first $2^{nR}$ that lie in the cone $\mathcal{V}_n(\mathbf{0},\mathbf{1},\theta+\epsilon)$ where $\theta = \sin^{-1}\left(e^{-(C-R)}\right)$. (If there are fewer than $2^{nR}$ such codewords, declare an error.) 
\end{enumerate}

It can be shown that with high probability the resulting codebook contains $2^{nR}$ codewords inside the cone of half-angle $\theta$. We now turn to bounding the distance and angle from the standard decoding region to the red alert codeword. 

The distance can be bounded using the techniques used to prove Lemma \ref{l:distance}. It follows that for any $\ec > 0$ and $n$ large enough, the squared distance from the red alert codeword to a standard codeword plus noise is at least $L$ with high probability,
\begin{align}
&\P\big( \| -\mathbf{x}(0) + \mathbf{x}(w) + \mathbf{z} \| \geq L \big) > 1 - \ec \\
&L^2 = n\big(\pp + \pa + 2\sqrt{\pp \pa} \cos \theta + N - \ec \big) \ .
\end{align} Substituting in $\cos^2\theta = 1 - \sin^2 \theta = 1 - e^{-2(C-R)}$, we get that $L^2$ is equal to
\begin{align}
n\Big(\pp + \pa + N+  2\sqrt{\pp \pa\left(1 -e^{-2(C-R)}\right)}  - \ec \Big) \ . \nonumber
\end{align}

Similarly, the techniques from Lemma \ref{l:angle} can be used to bound the angle. Let $\mathcal{V}_n(\mathbf{x}(0),\mathbf{0},\psi)$ denote the cone centered on the red alert codeword with axis running towards the origin and half-angle $\psi$. For any $\ec > 0$ and $n$ large enough, if the half-angle $\psi$ is larger than
\begin{align}
 \sin^{-1}\left(\frac{e^{-(C-R)} \sqrt{\pa}}{\sqrt{\pp + \pa + N + 2\sqrt{\pp \pa\left(1 -e^{-2(C-R)}\right)}}}\right)\nonumber
\end{align} then the cone contains the codeword for message $w$ plus noise with high probability, i.e.,
\begin{align}
\P\big(\mathbf{x}(w) + \mathbf{z} \in \mathcal{V}_n(\mathbf{x}(0),\mathbf{0},\psi )\big) > 1 - \ec \ . 
\end{align}

Finally, these two bounds can be combined, as in the proof of Lemma \ref{l:achievable}, to get an an achievable red alert exponent\footnote{Note that this is an improvement over the error exponent reported in Theorem 1 of \cite{nd10allerton} since we have used tighter upper bounds. Specifically, in \cite{nd10allerton}, we did not completely take advantage of the fact that both the noise and the standard codewords are nearly orthogonal to any fixed vector and to each other with high probability.} of
\begin{align}
E_{\text{ALERT}} &= \frac{\pp + \pa + 2\sqrt{\pp \pa (1 - e^{-2(C-R)})}}{2N}\nonumber \\ &~~~~ +~ \frac{1}{2}\log\left(\frac{\pa + N}{\pa}\right) - R \ . 
\end{align}

\begin{figure}[h!]
\centering
\includegraphics[width=3.8in]{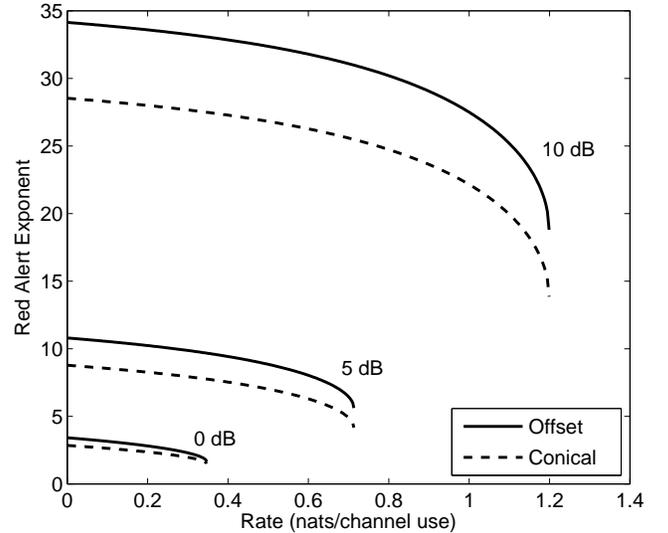}
\caption{Comparison of the red alert exponent attained by an (optimal) offset code construction and a conical code construction with an average power constraint of $0$, $5$, and $10$dB with $\pp = 2 \pa$. }\label{f:conevsoffset}
\end{figure}

In Figure \ref{f:conevsoffset}, we have plotted this red alert exponent alongside the optimal one derived via the offset construction for average power constaints $\pa = 0$, $5,$ and $10$dB with $\pp = 2 \pa$.

\section*{Acknowledgment}

The authors would like to thank the anonymous reviewers whose suggestions improved the presentation of this work. They would also like to thank B. Nakibo\u{g}lu for insightful discussions on the connections between this work and the discrete memoryless case.

\bibliographystyle{ieeetr}

\begin{thebibliography}{10}

\bibitem{kudryashov79}
B.~D. Kudryashov, ``On message transmission over a discrete channel with
  noiseless feedback,'' {\em Problemy Peredachi Informatsii}, vol.~15, no.~1,
  pp.~3--13, 1979.

\bibitem{sahai08}
A.~Sahai, ``Why do block length and delay behave differently if feedback is
  present?,'' {\em IEEE Transactions on Information Theory}, vol.~54,
  pp.~1860--1886, May 2008.

\bibitem{bnz09}
S.~Borade, B.~Nakibo\u{g}lu, and L.~Zheng, ``Unequal error protection: An
  information-theoretic perspective,'' {\em IEEE Transactions on Information
  Theory}, vol.~55, pp.~5511--5539, December 2009.

\bibitem{sd08}
A.~Sahai and S.~C. Draper, ``The ``hallucination'' bound for the
  \textsc{BSC},'' in {\em Proceedings of the IEEE International Symposium on
  Information Theory (ISIT 2008)}, (Toronto, Canada), July 2008.

\bibitem{ngzc11}
B.~Nakibo\u{g}lu, S.~K. Gorantla, L.~Zheng, and T.~P. Coleman, ``Bit-wise
  unequal error protection for variable length block codes with feedback,''
  {\em IEEE Transactions on Information Theory}, Submitted January 2011.
\newblock Available online: http://arxiv.org/abs/1101.1934.

\bibitem{ngzc10}
S.~K. Gorantla, B.~Nakibo\u{g}lu, L.~Zheng, and T.~P. Coleman, ``Bit-wise
  unequal error protection for variable length block codes with feedback,'' in
  {\em Proceedings of the IEEE International Symposium on Information Theory
  (ISIT 2010)}, (Austin, TX), June 2010.
\newblock Available online: http://arxiv.org/abs/1101.1934.

\bibitem{bnz08}
S.~Borade, B.~Nakibo\u{g}lu, and L.~Zheng, ``Some fundamental limits of unequal
  error protection,'' in {\em Proceedings of the IEEE International Symposium
  on Information Theory (ISIT 2008)}, (Toronto, Canada), July 2008.

\bibitem{csiszar80}
{I. Csisz\'ar}, ``Joint source-channel exponent,'' {\em Problems of Control and
  Information Theory}, vol.~9, no.~5, pp.~315--328, 1982.

\bibitem{bs09}
S.~Borade and S.~Sanghavi, ``Some fundamental coding theoretic limits of
  unequal error protection,'' in {\em Proceedings of the IEEE International
  Symposium on Information Theory (ISIT 2009)}, June 2009.

\bibitem{wangmsthesis}
D.~Wang, ``Distinguishing codes from noise: Fundamental limits and applications
  to sparse communication,'' Master's thesis, Massachusetts Institute of
  Technology, June 2012.

\bibitem{horstein63}
M.~Horstein, ``Sequential transmission using noiseless feedback,'' {\em IEEE
  Transactions on Information Theory}, vol.~9, pp.~136--143, July 1963.

\bibitem{forney68}
G.~D. Forney, Jr., ``Exponential error bounds for erasure, list, and decision
  feedback schemes,'' {\em IEEE Transactions on Information Theory}, vol.~14,
  pp.~206--220, March 1968.

\bibitem{burnashev76}
M.~V. Burnashev, ``Data transmission over a discrete channel with feedback:
  Random transmission time,'' {\em Problemy Peredachi Informatsii}, vol.~12,
  no.~4, pp.~10--30, 1976.

\bibitem{ds06}
S.~C. Draper and A.~Sahai, ``Noisy feedback improves communication
  reliability,'' in {\em Proceedings of the IEEE International Symposium on
  Information Theory (ISIT 2006)}, (Seattle, WA), July 2006.

\bibitem{sd10}
Y.~Y. Shkel and S.~C. Draper, ``Cooperative reliability for streaming multiple
  access,'' in {\em Proceedings of the IEEE International Symposium on
  Information Theory (ISIT 2010)}, (Austin, TX), June 2010.

\bibitem{blahut74}
R.~E. Blahut, ``Hypothesis testing and information theory,'' {\em IEEE
  Transactions on Information Theory}, vol.~20, pp.~405--417, July 1974.

\bibitem{ac86}
R.~Ahlswede and {I. Csisz\'ar}, ``Hypothesis testing with communication
  constraints,'' {\em IEEE Transactions on Information Theory}, vol.~32,
  pp.~533--542, July 1986.

\bibitem{hk89}
T.~S. Han and K.~Kobayashi, ``The strong converse theorem for hypothesis
  testing,'' {\em IEEE Transactions on Information Theory}, vol.~35,
  pp.~178--180, January 1989.

\bibitem{conwaysloane}
J.~H. Conway and N.~J.~A. Sloane, {\em Sphere packings, lattices, and groups}.
\newblock New York, NY: Springer, 3rd~ed., 1999.

\bibitem{shannon59}
C.~E. Shannon, ``Probability of error for optimal codes in a \textsc{G}aussian
  channel,'' {\em Bell System Technical Journal}, vol.~38, no.~3, pp.~611--656,
  1959.

\bibitem{nd10allerton}
B.~Nazer and S.~C. Draper, ``Gaussian red alert exponents: Geometry and code
  constructions,'' in {\em 48th Annual Allerton Conference on Communications,
  Control, and Computing}, (Monticello, IL), September 2010.

\bibitem{dembozeitouni}
A.~Dembo and O.~Zeitouni, {\em Large Deviations Techniques and Applications}.
\newblock New York, NY: Springer, 1998.

\bibitem{coverthomas}
T.~Cover and J.~Thomas, {\em Elements of Information Theory}.
\newblock Hoboken, NJ: Wiley-Interscience, 2nd~ed., 2006.

\bibitem{ppv10}
Y.~Polyanskiy, H.~V. Poor, and S.~Verd\'{u}, ``Channel coding rate in the
  finite blocklength regime,'' {\em IEEE Transactions on Information Theory},
  vol.~56, pp.~2307--2359, May 2010.

\end{thebibliography}

\end{document}